\newtheorem{theorem}{Theorem}
\newtheorem{lemma}{Lemma}
\newtheorem{definition}{Definition}
\newtheorem{remark}{Remark}
\let\oldnl\nl
\newcommand{\nonl}{\renewcommand{\nl}{\let\nl\oldnl}}
\def \ie {\textit{i.e.}}
\def \eg {\textit{e.g.}}
\def \etal {\textit{et~al.}}
\def \cS {\mathcal S}
\DeclareMathOperator* {\argmin} {arg\,min}
\newcommand{\N}{\mathbb{N}}
\newcommand{\R}{\mathbb{R}}
\newcommand{\et}{\tau}
\newcommand{\goal}{z}
\newcommand{\opt}{\et^*}
\newcommand{\scv}{\kappa}
\newcommand{\vs}{i_S}
\newcommand{\seq}{\gamma}
\newcommand{\Seq}{\Gamma}
\newcommand{\tw}{\theta}
\newcommand{\dm}{\delta}
\newcommand{\bigo}{\mathcal{O}}
\begin{document}

\title{Computational Aspects of Optimal Strategic Network Diffusion}

\author[a,b,*]{Marcin Waniek}
\author[b]{Khaled Elbassioni}
\author[c,d]{Fl\'avio L. Pinheiro}
\author[d]{\\C\'esar A. Hidalgo}
\author[b]{Aamena Alshamsi}

\renewcommand*{\Affilfont}{\normalsize}

\affil[a]{Compute Science, New York University Abu Dhabi, Abu Dhabi, UAE}
\affil[b]{Masdar Institute, Khalifa University of Science and Technology, Abu Dhabi, UAE}
\affil[c]{Nova Information Management School (NOVA IMS), Universidade Nova de Lisboa, Lisboa, Portugal}
\affil[d]{MIT Media Lab, Massachusetts Institute of Technology, Cambridge, USA}
\affil[*]{To whom correspondence should be addressed:  mjwaniek@gmail.com}

\date{}

\maketitle

\begin{abstract}
Diffusion on complex networks is often modeled as a stochastic process.
Yet, recent work on strategic diffusion emphasizes the decision power of agents \cite{alshamsi2018optimal} and treats diffusion as a strategic problem.  
Here we study the computational aspects of strategic diffusion, \ie, finding the optimal sequence of nodes to activate a network in the minimum time.
We prove that finding an optimal solution to this problem is NP-complete in a general case.
To overcome this computational difficulty, we present an algorithm to compute an optimal solution based on a dynamic programming technique.
We also show that the problem is fixed parameter-tractable when parametrized by the product of the treewidth and maximum degree.
We analyze the possibility of developing an efficient approximation algorithm and show that two heuristic algorithms proposed so far cannot have better than a logarithmic approximation guarantee.
Finally, we prove that the problem does not admit better than a logarithmic approximation, unless P=NP.
\end{abstract}

\section{Introduction}
\label{sec:introduction}

Diffusion in social networks has been a widely studied application in problems such as epidemic outbreaks and  the adoption of behaviors and innovations \cite{valente1995network, rogers2010diffusion, bailey1975mathematical, pastor2001epidemic, aral2017exercise, vasconcelos2019consensus}. The approach to model diffusion can be roughly divided between simple and complex contagion processes. Under simple contagion transmission requires contact with one individual at a constant rate such as transmission of infectious diseases \cite{kermack1927contribution}. An example of simple contagion process is the independent cascade model~\cite{kempe2003maximizing}. In contrast, complex contagion requires reinforcement from multiple sources. An example of complex contagion process is the linear threshold model~\cite{kempe2003maximizing}. Complex contagion has been widely studied in the context of cascade phenomena, with particular applications to viral marketing campaigns \cite{goldenberg2001using,leskovec2007dynamics,domingos2001mining}. In that context, the problem under analysis is that of finding the initial set of seeds that would maximize diffusion.
Different approaches to solving the seed selection problem include using centrality measures~\cite{kiss2008identification}, network percolation~\cite{morone2015influence}, and propagation traces~\cite{goyal2011data}, among many others~\cite{chen2009efficient,hinz2011seeding,libai2013decomposing}.
Some works in the literature take an adaptive approach to the topic, basing the selection of seeds on additional knowledge gathered during the process, be it either the current state of a dynamic network topology~\cite{tong2017adaptive}, or expanded pool of potential seeds~\cite{seeman2013adaptive,horel2015scalable}.
An alternative way of increasing the diffusion coverage is spreading available seeds over time~\cite{jankowski2017balancing,jankowski2018strategic}.
Despite the strategic aspects of different methods of seed selection, the diffusion process itself is purely stochastic.


However, there are cases in which diffusion takes place according to the same rules of complex contagion, but that are not purely stochastic (\ie, they have a strategic component) and in which the network itself is not, necessarily, a social network. Take for instance how regions develop new economic activities \cite{hidalgo2007product} or start production in new research fields \cite{guevara2016research}. In these cases agents are not embedded in the network, instead they are taking actions over a networked system, which captures the relatedness between economic or academic activities. More importantly, not only the choice of the initial state is strategic, but also the whole process of diffusion is strategic. 
Stochasticity in this context is not at the pairwise interactions between agents, but in the success rate of agents in entering new activities. In case of the economic development, the probability of success while entering a new activity (represented by a node in the network of related economic activities) increases with the number of other already developed activities that are connected to it. Thus, agents can fail or succeed depending on the particular context of each strategic action.

Alshamsi~\etal~\cite{alshamsi2018optimal} proposed to study the diversification of economic activities through the lenses of a strategic diffusion model. In this model the entire process of diffusion is guided by a strategic agent. The agent's actions concern the selection of a node in a network to be targeted at each time step. Hence, the node can become activated in the next step of the action sequence of the agent, or not. Following existing empirical findings \cite{hidalgo2007product,pinheiro2018shooting}, the model assumes that the activation probability (\ie, the probability of success of the agent's action) is captured by a complex contagion process. The goal is to activate the entire network, starting from a single active node, in the minimum time possible. While Alshamsi~\etal~\cite{alshamsi2018optimal} showed for a wide range of network topologies that activating a network in an optimal manner requires a balance between exploitation and exploration strategies, many questions regarding the computational feasibility of finding optimal strategies remained open. Here, we explore several theoretical computational considerations of strategic diffusion processes.

\subsection{Results and Methods}

We start by exploring the computational feasibility of finding an optimal strategy that minimizes the total diffusion time, \ie, answering the question of what is the best sequence of activating a given number of nodes in the network (starting from a seed node) in the shortest expected time, while taking into consideration that the time necessary to activate each node is proportional to the number of active neighbors.

We show that the decision version of the problem is NP-complete (Theorem~\ref{thrm:npc}). We prove it using a reduction from the Set Cover problem. This observation indicates that developing a polynomial algorithm for finding an optimal sequence of strategic diffusion is not realistic, unless P=NP.

Given this difficulty, we propose an algorithm for computing an optimal way of activating a given number of nodes in the shortest time possible (Algorithm~\ref{alg:dynamic-programming}). While the algorithm takes exponential time to find the solution, it still outperforms simple exhaustive search and can be used to compute an optimal solution for networks of moderate size. The algorithm utilizes the dynamic programming technique to find the fastest way to reach every possible state of network activation.

We also present two algorithms finding an optimal solution for a more restrictive class of networks, \ie, networks with bounded treewidth and bounded maximal degree. They traverse a tree decomposition of a network and find an optimal way of activating either an entire network or a given number of nodes, respectively. As a consequence, we show that the problem is fixed parameter-tractable when parametrized by the sum of the treewidth and maximum degree.

As finding an optimal solution proves to be computationally demanding, we turn our attention to assessing the possibility of obtaining an efficient approximation algorithm. We first investigate whether two effective heuristic algorithms proposed by Alshamsi~\etal~\cite{alshamsi2018optimal} have constant approximation ratios. One of them is the greedy algorithm, \ie, always targeting the node with the highest probability of activation. The other is the majority algorithm, \ie, targeting the node with the highest number of active neighbors. We show that the approximation ratio for both of these algorithms is $\Omega(\log n)$ (Theorems~\ref{thrm:greedy} and~\ref{thrm:majority}).
The proofs are based on constructing a sequence of networks where the ratio between total expected time of activation of the solution obtained using the heuristic algorithm and the optimal expected time of activation goes to infinity.

Finally, we show that, unless P=NP, there is no way to approximate the problem within a ratio better than $\ln n$, in particular it is impossible to construct an $r$-approximation algorithm for a constant $r$ (Theorem~\ref{thrm:logn}). We prove this claim by showing a reduction from the Minimum Set Cover problem and using the fact that Minimum Set Cover cannot be approximated within a ratio of $(1-\epsilon) \ln n$ for any $\epsilon > 0$, unless P=NP~\cite{dinur2014analytical}.

\subparagraph*{Organization of the manuscript}

The remainder of the article is organized as follows. Section~\ref{sec:preliminaries} describes the notations and computational problems used in our reductions. Section~\ref{sec:problem-definition} presents a formal definition of the strategic diffusion process and the main problem considered in our study. In Section~\ref{sec:optimal-solution} we present our hardness result for the decision version of the problem and we describe a dynamic programming algorithm to find an optimal way of strategic diffusion. Section~\ref{sec:treewidth} introduces algorithms computing optimal solution for networks with bounded treewidth and maximal degree. Section~\ref{sec:approximation} describes our results concerning approximation of the optimal solution. Section~\ref{sec:conclusions} presents conclusions and potential ideas for future work.

\section{Preliminaries \& Notation}
\label{sec:preliminaries}

In this section, we present notations and concepts that will be used throughout the paper.

\subsection{Basic Network Notation}

Let $G = (V,E,W)$ denote a network with weighted edges, where $V=\{1,\ldots,n\}$ denotes the set of $n$ nodes, $E \subseteq V \times V$ denotes the set of edges and $W \in \R^{n \times n}$ denotes the matrix with weights of edges.
We denote an edge between nodes $i$ and $j$ by $ij$.
In this work we consider networks that are \textit{undirected}, \ie, we do not discern between edges $ij$ and $ji$.
We also assume that networks do not contain self-loops, \ie, $\forall_{i \in V}ii \notin E$.
We denote by $N_G(i)$ the set of \textit{neighbors} of $i$ in $G$, \ie, $N_G(i) = \{j \in V : ij \in E\}$.
We denote by $d_G(i)$ the \textit{degree} of $i$ in $G$, \ie, $d_G(i) = |N_G(i)|$.

We consider networks with weighted edges.
We denote by $w_{ij} \in \R$ the weight of the connection from $i$ to $j$, we will call this value \textit{influence} that $i$ has on $j$.
We do not assume that the relation of influence is symmetric, \ie, it is possible that for $ij \in E$ we have $w_{ij} \neq w_{ji}$.
Unless stated otherwise, we will assume that $\forall_{i,j \in V} w_{ij} \geq 0$.
We also assume that if $ij \notin E$ then $w_{ij}=0$.
We denote by $w_i$ the sum of influence on node $i$, \ie, $w_i=\sum_{j \in N(i)}w_{ji}$.
We will typically assume that $\forall_{i \in V} w_i > 0$.

Let $\Seq(V)$ denote the set of all \textit{ordered sequences} of elements from $V$ without repetitions. 
Let $\seq_i$ denote the $i$-th element (node) of sequence $\seq \in \Seq(V)$, and $|\seq|$ denote the number of elements in $\seq \in \Seq(V)$.
Finally, we call  $\seq \in \Seq(V)$ a {\it full} sequence if  $|\seq|=|V|$ and we denote the set of full sequences by $\Seq^*(V)$.

To make the notation more readable, we will often omit the network itself from the notation when it is clear from the context, \eg, by writing $N(i)$ instead of $N_G(i)$. We sometimes treat sequences as sets, when the order is not important. We use $\oplus$ to denote the concatenation operation over sequences.

\subsection{Computational Problems}

In our reductions we will use two standard versions of the Set Cover problem, decision and combinatorial optimization.

\begin{definition}[Set Cover~\cite{karp1972reducibility}]
An instance of the Set Cover problem is defined by a universe $U=\{u_1, \ldots, u_{|U|}\}$, a collection of sets $\cS = \{S_1, \ldots, S_{|\cS|}\}$ such that $\forall_j{S_j \subset U}$, and an integer $k \leq |\cS|$.
The goal is to determine whether there exist $k$ elements of $\cS$ the union of which equals $U$.
\end{definition}

Set Cover is one of the classic $21$ Karp's NP-complete problems.

\begin{theorem}[\cite{karp1972reducibility}]
\label{thrm:set-cover-npc}
Set Cover problem is NP-complete.
\end{theorem}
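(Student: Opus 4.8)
The plan is to verify membership in NP and then establish NP-hardness via a polynomial-time reduction from a problem already known to be NP-complete, following Karp's reduction chain.

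First I would settle membership in NP. A natural certificate for a ``yes'' instance is a list of indices of $k$ sets drawn from $\cS$ whose union is claimed to be $U$. This certificate has size polynomial in the input, and verifying it---forming the union of the chosen sets and checking it against $U$---takes time polynomial in $|U|$ and $|\cS|$. Hence Set Cover lies in NP, and the whole weight of the theorem rests on hardness.

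For NP-hardness I would reduce from Vertex Cover: given a graph $H = (V_H, E_H)$ and an integer $k$, decide whether $H$ has a set of at most $k$ vertices meeting every edge. From such an instance I construct a Set Cover instance by taking the universe to be the edge set, $U := E_H$, and, for each vertex $v \in V_H$, the set $S_v := \{e \in E_H : v \in e\}$ of edges incident to $v$; the collection is $\cS := \{S_v : v \in V_H\}$, with the same bound $k$. This transformation is clearly computable in polynomial time. For correctness I would observe that a set $C \subseteq V_H$ meets every edge precisely when $\bigcup_{v \in C} S_v = U$, so a vertex cover of size at most $k$ corresponds exactly to a subcollection of size at most $k$ covering $U$. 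To match the definition given above, which asks for exactly $k$ sets, I note that whenever a cover of size at most $k$ exists one may pad it with arbitrary additional sets up to size $k$ without changing the union, which is legitimate since $k \leq |\cS| = |V_H|$. Thus the two instances are equivalent, and a polynomial-time decision procedure for Set Cover would yield one for Vertex Cover.

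I expect the construction and its correctness to be entirely routine---the equivalence is essentially a restatement of definitions. The genuine content, and the part that is not self-contained, is the NP-completeness of the problem we reduce from: following Karp, this ultimately traces back through Vertex Cover (equivalently Independent Set/Clique) to Cook's theorem that SAT is NP-complete. If one preferred to avoid invoking Vertex Cover, the same edge-versus-vertex template can instead be driven from Exact Cover or $3$-Dimensional Matching, as in Karp's original treatment, but the vertex-cover route gives the cleanest single step.
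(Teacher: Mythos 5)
The paper does not prove this statement---it is quoted directly from Karp's 1972 paper---and your argument is precisely Karp's original one: membership in NP via the obvious certificate, and hardness via the edge-incidence reduction from Vertex Cover (Node Cover), including the correct padding step to handle ``exactly $k$'' versus ``at most $k$.'' Your proof is correct and matches the cited source's approach, so there is nothing to add.
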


For the proof of the approximation hardness we will use the minimization version of the problem.

\begin{definition}[Minimum Set Cover]
An instance of the Minimum Set Cover problem is defined by a universe $U=\{u_1, \ldots, u_{|U|}\}$ and a collection of sets $S = \{S_1, \ldots, S_{|S|}\}$ such that $\forall_j{S_j \subset U}$.
The goal is to find subset $S^* \subseteq S$ such that the union of $S^*$ equals $U$ and the size of $S^*$ is minimal.
\end{definition}

For $\alpha\ge1$, an $\alpha$-approximation for a given instance of a minimization problem is a feasible solution whose objective is within a factor of $\alpha$ of any optimal solution. We will use the fact that Minimum Set Cover problem is hard to approximate.

\begin{theorem}[\cite{dinur2014analytical}]
\label{thrm:set-cover-ptas}
For any fixed $\epsilon > 0$ Minimum Set Cover cannot be approximated to within $(1-\epsilon) \ln n$, unless P=NP.
\end{theorem}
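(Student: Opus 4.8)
The plan is to derive this inapproximability through a gap-preserving reduction from a hard constraint-satisfaction problem, amplified by parallel repetition and then converted into a Set Cover instance by a covering gadget. Concretely, I would begin from the PCP theorem in its Label Cover formulation: it is NP-hard to distinguish Label Cover instances that are fully satisfiable from those in which at most a tiny fraction of the constraints can be simultaneously satisfied. Applying an analytically sharp form of parallel repetition drives the soundness error below any prescribed threshold $\delta$ while keeping the number of constraints subexponential, so that the instance produced at the end of the reduction still encodes an NP-hard decision.

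The combinatorial heart of the argument is to encode each Label Cover constraint using a \emph{partition system}: a ground set together with a family of partitions engineered so that covering the ground set is cheap when done ``consistently'' (using the blocks of a single partition, corresponding to agreement on a label) but forces roughly a logarithmic blow-up when done ``inconsistently'' (corresponding to disagreement). Stitching these gadgets together along the constraint graph yields a Set Cover instance in which a satisfying labeling produces a cover whose size is proportional to the number of variables, whereas in the soundness case any cover must be larger by a factor tending to the logarithm of the gadget's ground-set size. By choosing the gadget size and the repetition count in tandem and normalizing so that the final universe has size $n$, one arranges that the ratio between the soundness and completeness cover sizes approaches $(1-\epsilon)\ln n$; a polynomial-time $((1-\epsilon)\ln n)$-approximation would then decide the underlying NP-hard Label Cover gap, forcing P$=$NP.

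I expect the principal obstacle to be controlling the \emph{constant} in the gap rather than merely its order. Obtaining $\Omega(\log n)$ hardness is comparatively routine once one has the PCP theorem and a partition-system gadget in hand, but pushing the leading constant up to $(1-\epsilon)$ — and, crucially, doing so under the clean assumption P$\neq$NP instead of a stronger subexponential-time hypothesis — requires the tight, analytic control of the soundness of parallel-repeated games that is the technical contribution of~\cite{dinur2014analytical}. The delicate accounting is to ensure that the losses incurred in parallel repetition, in the gadget composition, and in the final normalization of the universe size compose without wasting any constant factor, which is precisely where the sharp analytical parallel-repetition bounds are indispensable.
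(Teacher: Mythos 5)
The paper does not actually prove this statement: Theorem~\ref{thrm:set-cover-ptas} is imported verbatim from Dinur and Steurer~\cite{dinur2014analytical} and used as a black box in the proof of Theorem~\ref{thrm:logn}. There is therefore no in-paper argument to compare yours against; any ``proof'' here can only be a summary of the cited work.

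Judged on those terms, your outline is a faithful road map of how the result is established in the literature: the PCP theorem in its Label Cover / projection-games form, a sharp (``analytical'') parallel repetition bound to drive down the soundness error, and Feige-style partition-system gadgets that turn the resulting gap into a $(1-\epsilon)\ln n$ covering gap. You also correctly locate the entire difficulty in the constant: $\Omega(\log n)$ hardness is routine, whereas the tight constant under the bare assumption P$\neq$NP (rather than a slightly superpolynomial-time hypothesis, as in Feige's original work) is exactly the contribution of~\cite{dinur2014analytical}. One imprecision is worth correcting: you say parallel repetition keeps the number of constraints ``subexponential, so that the instance \ldots still encodes an NP-hard decision.'' For a reduction to yield hardness under P$\neq$NP the total blow-up must be \emph{polynomial}; a superpolynomial (even subexponential) instance size would only give hardness under a stronger time hypothesis, which is precisely the weakness of the earlier quasi-NP-hardness results that the analytical parallel repetition theorem was designed to remove. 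Beyond that, your sketch defers all load-bearing steps (the repetition bound itself, the partition-system parameters, the composition accounting) to the citation --- which is acceptable here, since the statement is itself presented as a citation rather than proved.
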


We now move to defining the model of strategic diffusion and the main optimization problem of our study.

\section{Problem Definition}
\label{sec:problem-definition}

In this section we describe the process of strategic network diffusion, which is the main focus of our study, as well as the computational problem concerning it.

Most diffusion models are purely stochastic~\cite{goldenberg2001using,kempe2003maximizing}.
In this work however, we focus on the strategic model of diffusion to account for cases in which the interconnections between targets affect their activation time and therefore choosing the order in which nodes will be targeted for activation is strategically planned. 

In this model, the process of diffusion is driven by a strategic agent.
At the beginning of the process only one chosen node of the network, the \textit{seed node} $\vs$, is active.
Then, the agent chooses a sequence $\seq \in \Seq(V)$ that provides the order in which the nodes will be activated.
The probability of successful activation of a node $i$ in one attempt is given by:
$$
p(i) = \beta \left( \frac{\sum_{j \in N(i) \cap A} w_{ji}}{w_i} \right)^\alpha
$$
where $A$ is the set of currently active nodes (at the beginning of the process it consists only of the seed), and $\alpha,\beta \in [0,1]$ are constants.
Unless stated otherwise, we will assume that $\alpha = \beta = 1$.
Notice that the expected time of activation of node $i$ with non-zero activation probability is $\et(i)=\frac{1}{p(i)}$.
By $\et(\seq)$ we will denote the expected time of activation of all nodes in the sequence $\seq$.

This model was proposed by Alshamsi~\etal~\cite{alshamsi2018optimal} for undirected networks with unweighted edges.
If we assume that $w_{ij}=1 \iff ij \in E$ then our model is exactly equivalent to the model proposed by Alshamsi~\etal~\cite{alshamsi2018optimal}.

We now define the main computational problem of our study.

\begin{definition}[Optimal Partial Diffusion Sequence]
This problem is defined by a tuple $(G,\vs,\goal)$, where $G=(V,E,W)$ is a given network with weighted edges, $\vs \in V$ is the seed node and $\goal \leq n$ in the number of nodes to activate.
The goal is to identify $\seq^* \in \Seq(V)$ such that $\seq^*_1 = \vs$, $|\seq^*|=\goal$ and $\et(\seq^*)$ is minimal.
\end{definition}
In other words, we intend to find the fastest way to activate $\goal$ nodes in the network. When $\goal=|V|$ in the above definition, the problem is simply called Optimal (Full) Diffusion Sequence.
%
%

\begin{remark}\label{r1}
In the case of integer weights of {\it polynomial length}, for any instance of the Optimal Diffusion Sequence problem there exists a polynomially equivalent instance with binary weights.
\end{remark}

\begin{proof}
Let $(G,\vs,|V|)$ be a given instance of the Optimal Diffusion Sequence.
In order to construct an equivalent instance with $0/1$-weights we replace every edge $ij$ with a set of $w_{ij}$ parallel $2$-paths $\{\langle i,k_{i,j,1},j\rangle, \ldots,$ $\langle i,k_{i,j,w_{ij}},j\rangle\}$, and we set weights of the new edges to $w_{i k_{i,j,r}}=w_{k_{i,j,r} j}=1$,  $w_{k_{i,j,r} i}=w_{j k_{i,j,r}}=0$, for $r=1,\ldots,w_{ij}$.

Given a feasible solution $\seq$ to the original instance, we obtain a feasible solution $\seq'$ to the constructed instance by activating all nodes $k_{i,j,r}$ immediately after activating node $i$ in the original sequence (notice that the expected time of activation of every such node $k_{i,j,r}$ is $1$).
Thus, if the value of the solution $\seq$ is $t$, then the value of the $\seq'$ is $t+\sum_{ij\in E}\big(w_{ij}+w_{ji}\big)$.

Given a feasible solution $\seq'$ to the constructed instance of the binary version of the problem, we can obtain a solution with a lesser or equal value by activating all nodes $k_{i,j,r}$ in one block immediately after activating node $i$ (again, the expected time of activation of node $k_{i,j,r}$ is always $1$, while it contributes to the time of activation of node $j$).
For such sequence, we can obtain corresponding solution $\seq$ to the original instance of the problem by removing from it all nodes $k_{i,j,r}$.
If the value of the solution $\seq'$ is $t$, then the value of the $\seq$ is $t-\sum_{ij \in E}\big(w_{ij}+w_{ji}\big)$.

Note that this reduction does not work in general for the Optimal Partial Diffusion Sequence if $z<|V|$.
\end{proof}

\section{Computing an Optimal Solution}
\label{sec:optimal-solution}

We now describe our results on finding an efficient way to compute an exact solution to the Optimal Diffusion Sequence problem.

\subsection{Hardness of Finding an Optimal Solution}

First, we show that the decision version of Optimal Partial Diffusion Sequence problem is NP-complete in a general case.

\begin{theorem}
\label{thrm:npc}
Optimal Partial Diffusion Sequence problem is NP-complete, even if all weights are in $\{0,1\}$.
\end{theorem}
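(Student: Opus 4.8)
The plan is to reduce from Set Cover (Theorem~\ref{thrm:set-cover-npc}): I would encode an instance $(U,\cS,k)$ as a diffusion instance $(G,\vs,\goal)$ together with a target value $T$, arranging that a cover of size $k$ exists if and only if some feasible sequence $\seq$ attains $\et(\seq)\le T$. Membership in NP is the easy half: given a candidate $\seq$, each $p(\seq_i)$ is a ratio of sums of $0/1$ weights, so $\et(\seq)=\sum_i 1/p(\seq_i)$ is a sum of at most $n$ rationals whose numerators and denominators are bounded by $n$; the total therefore has polynomially many bits and can be compared with $T$ in polynomial time (a sequence that ever forces some $p(\seq_i)=0$ is simply rejected as having infinite time).

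For hardness I would build a tripartite network consisting of the seed $\vs$, one \emph{set-node} $c_j$ for each $S_j\in\cS$, and one \emph{element-node} $e_i$ for each $u_i\in U$. The key is to exploit the permitted asymmetry of the weights and orient all influence ``forward'': the seed influences each $c_j$, each $c_j$ influences exactly those $e_i$ with $u_i\in S_j$, and all reverse influences are set to $0$. Consequently set-nodes gain nothing from elements, and an element-node has positive activation probability only once at least one of its covering set-nodes is active; otherwise $p(e_i)=0$ and its expected time is infinite. This is precisely how covering is encoded: activating all $m:=|U|$ element-nodes forces the activated set-nodes to cover $U$.

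I would then steer the objective by making a set-node far costlier to activate than an element-node, so that $\et(\seq)$ is dominated by the \emph{number} of activated set-nodes. Concretely, I attach to each $c_j$ a gadget inflating its expected activation time to a large value $C$, chosen strictly larger than the combined time of all element activations (which is at most $m\ell$, where $\ell:=|\cS|$, since each element-node has in-degree at most $\ell$), while ensuring the gadget's auxiliary nodes must be co-activated with $c_j$ and are counted into $\goal$. Fixing $\goal$ so that reaching it while activating all $m$ elements leaves room for exactly $k$ set-nodes, and setting $T:=kC+m\ell$, the intended behaviour is: if a size-$k$ cover exists, activate those $k$ sets and then all elements, yielding $\et(\seq)\le kC+m\ell=T$; conversely, any feasible sequence either activates at most $k$ sets that fail to cover $U$ (infinite time on some uncovered element) or activates at least $k+1$ sets, incurring cost $\ge(k+1)C>T$.

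The main obstacle is this last cost-domination argument in combination with the freedom of the \emph{partial} problem to omit nodes. Because the model's costs are ``smooth'' --- an element's time degrades gradually with the number of active covering sets rather than switching sharply between covered and uncovered --- I cannot rely on a crisp all-or-nothing element cost, so I must instead guarantee that a single extra set-activation strictly outweighs every saving obtainable by reordering the sequence or by dropping elements; this is exactly what the choice $C>m\ell$ buys. Equally delicate is realising the large uniform cost $C$ with $0/1$ weights \emph{without} creating cheap ``filler'' nodes that a solver could use to pad the sequence up to length $\goal$ while avoiding element activations. The gadget and the precise value of $\goal$ must therefore be designed so that every way of reaching $\goal$ nodes in finite expected time either commits to $k$ covering sets together with all $m$ elements, or is forced to pay for an additional expensive set-node.
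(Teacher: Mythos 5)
Your overall architecture matches the paper's: membership in NP is argued the same way, and the hardness reduction is from Set Cover with a seed, one node per set, one node per element, influence oriented only ``forward'' via asymmetric weights, a cost-inflating gadget on each set-node, a budget $\goal$ that forces all elements to be activated, and a threshold of the form $kC + (\text{element cost})$ with $C$ dominating. The dichotomy you state at the end (either $\le k$ covering sets and all elements, or $\ge k+1$ sets and cost $>T$) is exactly the paper's case analysis.

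However, there is a genuine gap at precisely the point you flag as ``the main obstacle'': the gadget is never constructed, and the one design decision you do commit to --- that the gadget's auxiliary nodes ``must be co-activated with $c_j$ and are counted into $\goal$'' --- steers you into the filler problem rather than out of it. If the auxiliary nodes are activatable and counted toward $\goal$, you must both budget their activation times into $T$ and rule out sequences that pad the count with auxiliary nodes of one expensive set instead of activating elements; neither is addressed. The paper resolves this with the opposite choice: each set-node $S_i$ gets a neighbor $q_i$ with $w_{q_i S_i}=|U||\cS|$ but $w_{S_i q_i}=0$, so $q_i$ inflates the denominator $w_{S_i}$ (making $\et(S_i)=|U||\cS|+1$) while $q_i$ itself has zero incoming influence from the only direction reachable from the seed and hence can \emph{never} be activated --- it is structurally excluded from being filler and from the count $\goal=k+|U|+1$. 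The reduction to $\{0,1\}$ weights is then obtained by replacing the single heavy edge $q_iS_i$ with $|U||\cS|$ parallel $2$-paths whose intermediate nodes are likewise unreachable (Remark~\ref{r1}), so the objective and the set of activatable nodes are unchanged. Without some such mechanism making the cost-inflating nodes unactivatable (or an explicit accounting argument for activatable ones), your proof does not go through as written.
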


\begin{proof}
The decision version of the Optimal Partial Diffusion Sequence problem is the following: given a network $G=(V,E,W)$, the seed node $\vs$, the number of nodes to activate $\goal$, and a value $t^* \in \R^+$, does there exist a sequence of nodes $\seq^* \in \Seq(V)$ starting with $\vs$ such that $|\seq^*| = \goal$ and $\et(\seq^*) \leq t^*$?

This problem clearly is in NP, as given a solution, \ie, a sequence of nodes $\seq^* \in \Seq(V)$, we can compute its expected time of activation in polynomial time.

To prove the NP-hardness of the problem we will show a reduction from the NP-complete Set Cover problem.

\begin{figure}[tbh]
\centering
\includegraphics[width=.7\linewidth]{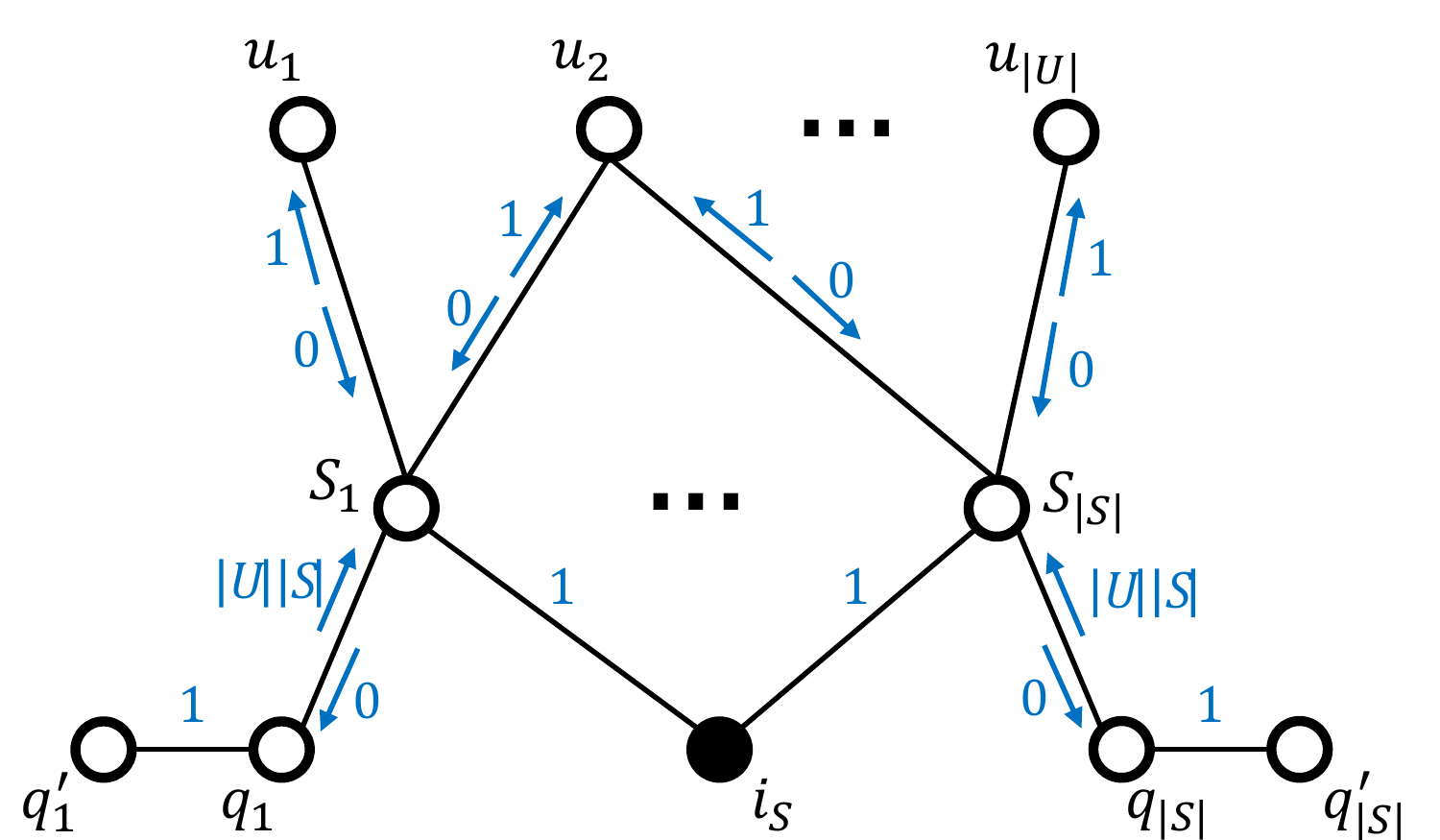}
\caption{Network $G$ constructed for the proof of Theorem~\ref{thrm:npc}.
Blue numbers next to edges express their weights.
If there are no arrows next to edge $ij$ then $w_{ij}=w_{ji}$.
Otherwise weight $w_{ij}$ is denoted next to the arrow pointing towards node $j$, and weight $w_{ji}$ is denoted next to the arrow pointing towards node $i$.}
\label{fig:nph}
\end{figure}

Let $(U,\cS,k)$ be an instance of the Set Cover problem.
We define network $G$ as follows (an example of such network is presented in Figure~\ref{fig:nph}):
\begin{itemize}
\item \textbf{The set of nodes:}
For every $S_i \in\cS$, we create three nodes, denoted by $S_i$, $q_i$ and $q'_i$.
For every $u_i \in U$, we create a single node $u_i$.
Additionally, we create a single node $\vs$.
\item \textbf{The set of edges:}
For every node $S_i$, we create an edge $S_i \vs$.
For every node $q_i$, we create edges $q_i S_i$ and $q_i q'_i$.
Finally, for every node $u_i$ and every $S_j$ such that $u_i \in S_j$ we create an edge $u_i S_j$.
\item \textbf{The weight matrix:}
For every edge $u_i S_j$ we set weights to $w_{u_i S_j}=0$ and $w_{S_j u_i}=1$.
For every edge $S_i q_i$ we set weights to $w_{S_i q_i}=0$ and $w_{q_i S_i}=|U||\cS|$.
For all other pairs of nodes connected with an edge we set the weights to $1$.
\end{itemize}

Let $\goal=k+|U|+1$ (we intend that the solution sequence will activate node $\vs$, $k$ nodes in $\cS$ and all nodes in $U$), and a value $t^*=k(|U||\cS|+1)+|U||\cS|$.
Now, consider an instance of the Optimal Partial Diffusion Sequence problem $(G,\vs,\goal)$.
We will now show that a solution to this instance  of value at most $t^*$ corresponds to a solution to the given instance of the Set Cover problem.

Notice that the expected time of activation of every node $S_i$ is always $|U||\cS|+1$, while for the expected time of activation of a node $u_i$ we have $\et(u_i) \leq |\{S_j : u_i \in S_j\}| \leq |\cS|$.
Notice also that neither any of the nodes $q_i$ nor any of the nodes $q'_i$ can be activated when $\vs$ is the seed node, as the influence of $S_i$ on $q_i$ is zero.

First, we will show that if there exists a solution $\cS^*$ to the given instance of the Set Cover problem, then there also exists a solution to the constructed instance of the Optimal Partial Diffusion Sequence problem of value at most $t^*$.
We can construct such a solution by first activating every node $S_i \in \cS^*$ ($k$ nodes activated in expected time $|U||\cS|+1$ each) and then activating all nodes $u_i$ ($|U|$ nodes activated in expected time not exceeding $|\cS|$ each).
Such $\seq^*$ activates $k+|U|$ nodes in expected time not exceeding $k(|U||\cS|+1)+|U||\cS|$, therefore it is a solution to the constructed instance of the Optimal Partial Diffusion Sequence problem of value at most $t^*$.

Second, to complete the proof of the NP-hardness, we have to show that if there exists a solution $\seq^*$ to the constructed instance of the Optimal Partial Diffusion Sequence problem of value at most $t^*$, then there also exists a solution to the given instance of the Set Cover problem.
Such a solution is $\cS^* = \seq^* \cap\cS$, \ie, choosing sets $S_i$ corresponding to nodes $S_i$ occurring in sequence $\seq^*$.
Notice that there cannot be more than $k$ such nodes, as activating $k+1$ nodes $S_i$ has expected time $(k+1)(|U||\cS|+1) > t^*$.
Since this is the case, in order to activate $k+|U|$ nodes other than $\vs$, sequence $\seq^*$ has to activate all nodes in $U$.
However, to activate a node $u_i$, we first have to activate at least one of its neighbors, \ie, node $S_j$ such that $u_i \in S_j$.
Therefore, for every node $u_i$ there must exist at least one node $S_j \in \seq^* \cap S$ such that $u_i \in S_j$.
Hence, $S^* = \seq^* \cap \cS$ is a valid solution to the given instance of the Set Cover problem.

Finally, we can use the construction in Remark~\ref{r1} to replace every edge $q_i S_i$  by a set of parallel paths with edge weights in $\{0,1\}$. Note that such a replacement does not change the value of the objective as the nodes $q_i$, and hence the intermediate nodes added   on the parallel paths,  are never activated.
This concludes the proof.
\end{proof}

Therefore, there exists no polynomial algorithm finding the optimal way to activate a given number of nodes in the process of strategic diffusion, unless P=NP.
However, we now propose an exponential algorithm based on dynamic programming technique.

\subsection{Dynamic Programming Algorithm}

We present an algorithm for computing an optimal solution to the problem using the dynamic programming technique~\cite{bellman1954theory}.
The main idea behind dynamic programming is breaking the main problem into a number of smaller, easier to solve sub-problems in a recursive manner.
However, unlike in standard recursion where often the same computation is repeated multiple times, in dynamic programming the solution to each sub-problem is computed only once and stored in memory for future use.
Over the years the dynamic programming techniques were developed in various directions~\cite{bertsekas1995dynamic,bertsekas1996neuro,rust2016dynamic,bellman2015applied}, however our algorithm is based on the original version of the technique. 

Notice that in our setting the activation probability of any given node depends on the set of active nodes in the network, however it does not depend on the order in which these nodes were activated.
Hence, the solution for each set of active nodes have to be computed only once.
Moreover, solving the sub-problem of finding an optimal way of activating $k$ nodes in the network allows to efficiently find the solution for $k+1$ nodes, as the total expected time of activation can be expressed in a recursive manner as a sum of time of activation of the initial $k$ nodes and the time of activation of the last node.
The idea of using solutions for smaller sub-problems to solve a larger problem is the core concept behind the dynamic programming, hence we find it a suitable optimization method for solving the Optimal Partial Diffusion Sequence problem.

The algorithm is based on the following recurrence relation, where $\et^*(C)$ is the minimal expected time of activation of a set of nodes $C$:

$$
\et^*(C) =
	\begin{cases}
		0 & \text{if}\ C=\{\vs\} \\
		\min\limits_{i \in C}\et^*(C \setminus \{i\}) + \frac{w_i}{\sum_{j \in N(i) \cap C}w_{ji}} & \text{if}\ |C| > 1 \\
		\infty & \text{otherwise}
	\end{cases}
$$
where we assume that summation over empty set results in zero and division by zero results in $\infty$.

Out of sets of size one, only the set consisting of the seed node has finite time of activation.
For sets that are larger than one we divide the problem into computing the optimal time of activation of all nodes but the last one, and then computing the time of activation of the last node.
Since in the strategic diffusion process the nodes are activated sequentially, one of the nodes $i$ from the set $C$ has to be activated last, and we are able to compute its time of activation based on the information of which other nodes of the network are already active.
Notice that attempting to activate a node without any active neighbors (or with sum of influences from these neighbors equal to zero) will result in the value of the formula equal to $\infty$.

Pseudocode of the dynamic programming algorithm is presented as Algorithm~\ref{alg:dynamic-programming}, while Figure~\ref{fig:dynpr-small} presents the intuition behind the algorithm by showing how it works on a specific graph structure.

\begin{algorithm}[t]
\setstretch{1.35}
\LinesNumbered
\DontPrintSemicolon
\SetAlgoNoEnd
\SetAlgoNoLine
\KwIn{A weighted network $(V,E,W)$, a seed node $\vs \in V$, and the number of nodes to activate $\goal$.}
\KwOut{Sequence of activation of $\goal$ nodes starting with $\vs$ with minimal expected time of activation.}
\For{$C \subseteq V$}{
	$\et^*[C] \gets \infty$\;
}
$\et^*[\{\vs\}] \gets 0$\;
$\seq^*[\{\vs\}] \gets \langle \vs\rangle$\;
\For{$k=1,\ldots,\goal-1$}{
	\For{$C \subset V : (|C|=k) \land (\et^*[C] < \infty)$}{
		\For{$i \in V : (i \notin C) \land (N(i) \cap C \neq \emptyset)$}{
			$\Delta \et \gets \frac{w_i}{\sum_{j \in N(i) \cap C}w_{ji}}$\;
			\If{$\et^*[C]+\Delta \et < \et^*[C\cup\{i\}]$}{
				$\et^*[C\cup\{i\}] \gets \et^*[C]+ \Delta \et$\;
				$\seq^*[C\cup\{i\}] \gets \seq^*[C] \oplus \langle i\rangle$\;
			}
		}
	}
}
\Return $\seq^*[\argmin_{C \subseteq V : |C|=\goal}\et^*[C]]$\;
\caption{Dynamic programming algorithm for strategic diffusion}
\label{alg:dynamic-programming}
\end{algorithm}

\begin{figure}[tbh]
\centering
\includegraphics[width=\linewidth]{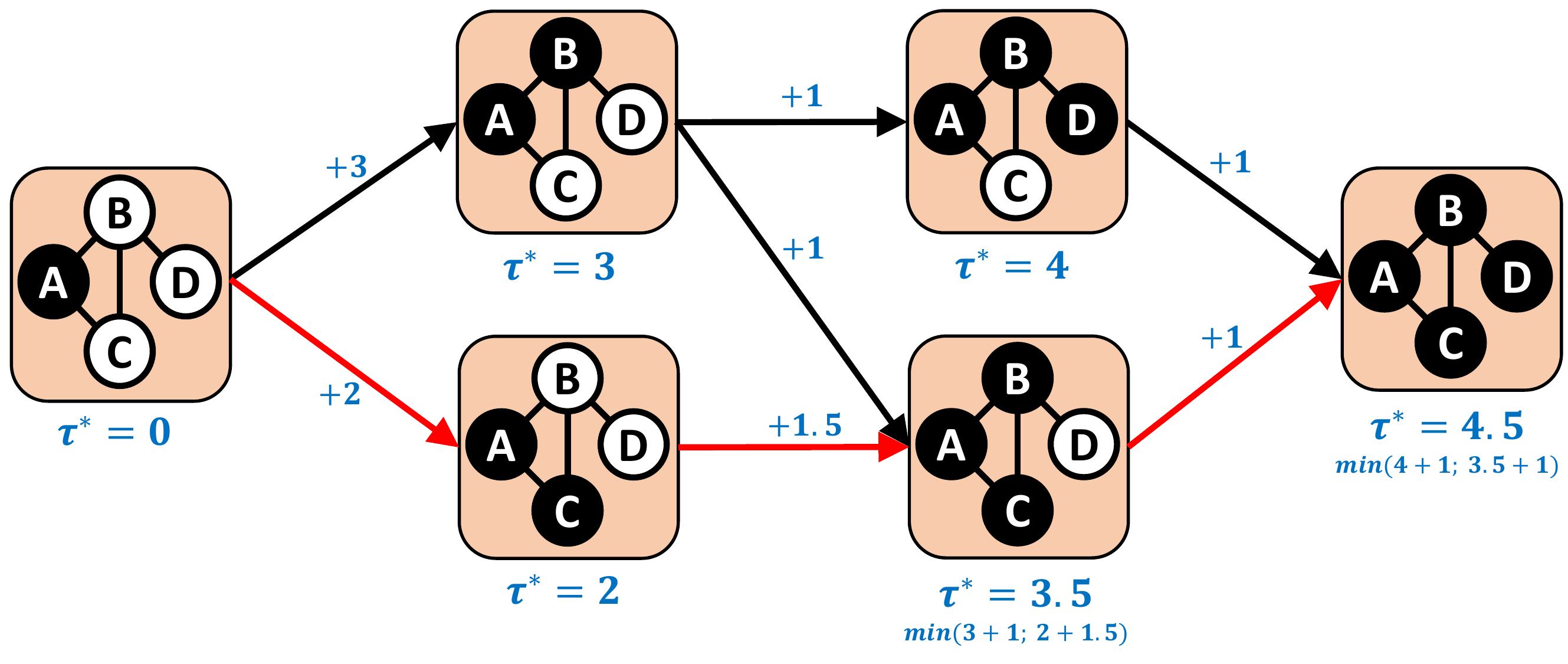}
\caption{Example of using dynamic programming.
Each large orange state represents a possible state of activation of the network, with black nodes representing active nodes, and white nodes representing inactive nodes.
The value of $\et^*$ indicates the minimal expected time required to reach this state of activation.
Value on each arrow represents the cost of activation of a single node, required to move between states.
Red arrows represent optimal sequence of activation, which corresponds to the least expensive path from the initial state to the state where entire network is activated.}
\label{fig:dynpr-small}
\end{figure}

In entry $\et^*[C]$ we compute the minimal expected time necessary to activate all nodes in set $C$, while in entry $\seq^*[C]$ we keep a sequence of activation allowing us to achieve this optimal expected time.
In the $k$-th execution of the loop in line~5 we compute values of entries in $\et^*$ and $\seq^*$ for sets $C$ such that $|C|=k+1$.
We do it by iterating (in loop in line~6) over all sets of nodes of size $k$ that can be activated when we start the process from the seed node $\vs$ and then iterate (in loop in line~7) over all possible nodes that can be targeted, \ie, nodes with non-zero probability of activation, when the set of active nodes is $C$.
In lines~8-11 we update the best way of activating nodes in set $C \cup \{ i \}$ when activating nodes in $C$ first and activating node $i$ afterwards has lower expected time than currently known fastest way to activate nodes in $C \cup \{ i \}$.

As for the implementation details, $\et^*$ and $\seq^*$ can be implemented as hash tables.
In this case checking whether $\et^*[C] < \infty$ is equivalent to checking whether $\et^*$ contains the entry for $C$ and therefore lines~1-2 can be omitted.

Notice also that during the $k$-th execution of the loop in line~5 we only need entries in tables $\et^*$ and $\seq^*$ for sets $C$ such that $|C|=k$.
Hence, to reduce memory requirements, we can only keep in memory entries from tables $\et^*$ and $\seq^*$ for two sizes of sets: the ones for size $k$, computed in the previous execution of the loop (or initialized in lines~3-4 in case of the first execution) and the ones for size $k+1$, being computed in the current execution of the loop.

Despite these optimization possibilities, the dynamic programming algorithm remains exponential, as it considers all subsets of nodes possible to be activated.
If the number of nodes to be activated is constant then the algorithm is polynomial.
\begin{theorem}
Algorithm~\ref{alg:dynamic-programming} solves the Optimal Partial Diffusion Sequence problem in time $\bigo(|V|^{z+1}|E|)$.
\end{theorem}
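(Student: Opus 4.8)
The plan is to verify two claims: that Algorithm~\ref{alg:dynamic-programming} returns an optimal sequence (correctness), and that it runs within the stated time bound. I would treat these separately. The conceptual crux of correctness is the observation already highlighted before the recurrence: the success probability $p(i)$, and hence the expected activation time $\frac{w_i}{\sum_{j \in N(i) \cap A} w_{ji}}$ of a node $i$, depends only on the current active \emph{set} $A$ and not on the order in which its elements were activated. Consequently the expected time of any sequence $\seq$ decomposes as a sum of per-node increments, each determined solely by the active set present when that node is targeted, so it suffices to optimize over the lattice of activatable subsets rather than over sequences.

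Given this, I would argue by induction on $|C|$ that after termination $\et^*[C]$ stores the minimal expected time to activate \emph{exactly} the set $C$ starting from $\vs$, with $\seq^*[C]$ a witnessing sequence. The base case $C=\{\vs\}$ is set in line~3. For the inductive step, any sequence activating exactly $C$ with $|C|>1$ targets some node $i\in C\setminus\{\vs\}$ last, at which point the active set is $C\setminus\{i\}$; the cost of that final step is $\frac{w_i}{\sum_{j\in N(i)\cap(C\setminus\{i\})}w_{ji}}$, which equals $\frac{w_i}{\sum_{j\in N(i)\cap C}w_{ji}}$ because $i\notin N(i)$ (no self-loops). Minimizing the preceding cost over all completions and then over $i$ yields exactly $\min_{i\in C}\et^*(C\setminus\{i\}) + \frac{w_i}{\sum_{j\in N(i)\cap C}w_{ji}}$. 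I would then note that the forward sweep in lines~5--11 realizes this backward recurrence: processing subsets by increasing size and relaxing every transition $C \mapsto C\cup\{i\}$ sets $\et^*[C']$, for each $|C'|=k+1$, to the minimum of $\et^*[C'\setminus\{i\}]+\Delta\et$ over all last-activated $i$, which is the recurrence with $C=C'\setminus\{i\}$. A node with $N(i)\cap C=\emptyset$ is excluded in line~7, matching the $\infty$ convention, and the final line returns the best over all size-$\goal$ sets; this is optimal since only sets containing $\vs$ ever receive finite values.

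For the running time I would use a deliberately loose count matching the stated exponent. The loops in lines~5--6 together range over subsets $C$ of size at most $\goal-1$, and the number of such subsets is $\sum_{k\le \goal}\binom{|V|}{k}=\bigo(|V|^{\goal})$. For each $C$ the loop in line~7 iterates over at most $|V|$ candidate nodes $i$, and for each $i$ the increment $\Delta\et$, together with $w_i=\sum_{j\in N(i)}w_{ji}$ and the partial sum $\sum_{j\in N(i)\cap C}w_{ji}$, is computable in $\bigo(|E|)$ time. Multiplying gives $\bigo(|V|^{\goal})\cdot\bigo(|V|)\cdot\bigo(|E|)=\bigo(|V|^{\goal+1}|E|)$, as claimed; the hash-table operations on the sets $C$ contribute only a factor $\bigo(\goal)$ per pair, which is subsumed by the $\bigo(|E|)$ factor and does not affect the bound.

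The main obstacle is conceptual rather than computational: making rigorous that restricting attention to the subset lattice loses nothing relative to optimizing over full sequences, \ie, the order-independence of $p(i)$ and the resulting additive decomposition of $\et(\seq)$. Once that equivalence is established, the last-node recurrence and the equivalence of the forward relaxation to the backward recurrence are routine, and the time bound follows from the crude subset count above.
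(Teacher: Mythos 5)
Your proposal is correct and follows essentially the same route as the paper, which states this theorem without a formal proof and relies on exactly the observations you formalize: the order-independence of the activation probability, the last-node recurrence for $\et^*(C)$, and a subset-enumeration count of the running time (your count is deliberately loose but yields the stated $\bigo(|V|^{z+1}|E|)$ bound, consistent with the paper's remark that the hash-table implementation avoids the $2^{|V|}$ initialization). Nothing is missing relative to the paper's own justification.
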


In Section~\ref{sec:treewidth}, we give a dynamic program that works more efficiently when the network has both bounded degree and bounded treewidth. Note that it is common to use dynamic programming for solving optimization problems on graphs with bounded treewidth, see, \eg, \cite{Bodlaender88}. 

An alternative approach to developing an algorithm of finding an effective way of activating the network would be to use the reinforcement learning techniques~\cite{kaelbling1996reinforcement}.
Reinforcement learning is based on the idea of making a sequence of decisions by finding a balance between exploitation (performing the currently best action to obtain short-term profits) and exploration (investigating other actions in the hope of a long-term gain).
In case of strategic diffusion, the exploitation would be activating a node with the highest activation probability, while the exploration would be activating a node with lower activation probability in order to make its neighbors easier to activate.
Nevertheless, an algorithm based on reinforcement learning would not guarantee that identified activation sequence is the optimal solution, unlike the presented dynamic programming algorithm, which offers such certainty.
Hence, we leave the development of an algorithm based on reinforcement learning as a potential future work.

We now discuss other ways of optimization for more restricted network structures.

\subsection{Decomposition into Biconnected Components}

We will now show that looking for the optimal way of activating all nodes of the network can be made simpler by using decomposition into biconnected components.

Cut node in a network (also called an articulation point) is a node the removal of which causes network to fall into two or more connected components.
By separating the network in cut nodes we obtain a decomposition into biconnected components.
Biconnected component is a maximal sub network (in terms of inclusion) such that removing any node from it will not disconnect it.
An example of a decomposition into biconnected components is shown in Figure~\ref{fig:decomposition}.
Notice that a copy of a cut node appears in every biconnected component it belongs to.

\begin{figure}[tbh]
\centering
\includegraphics[width=.8\linewidth]{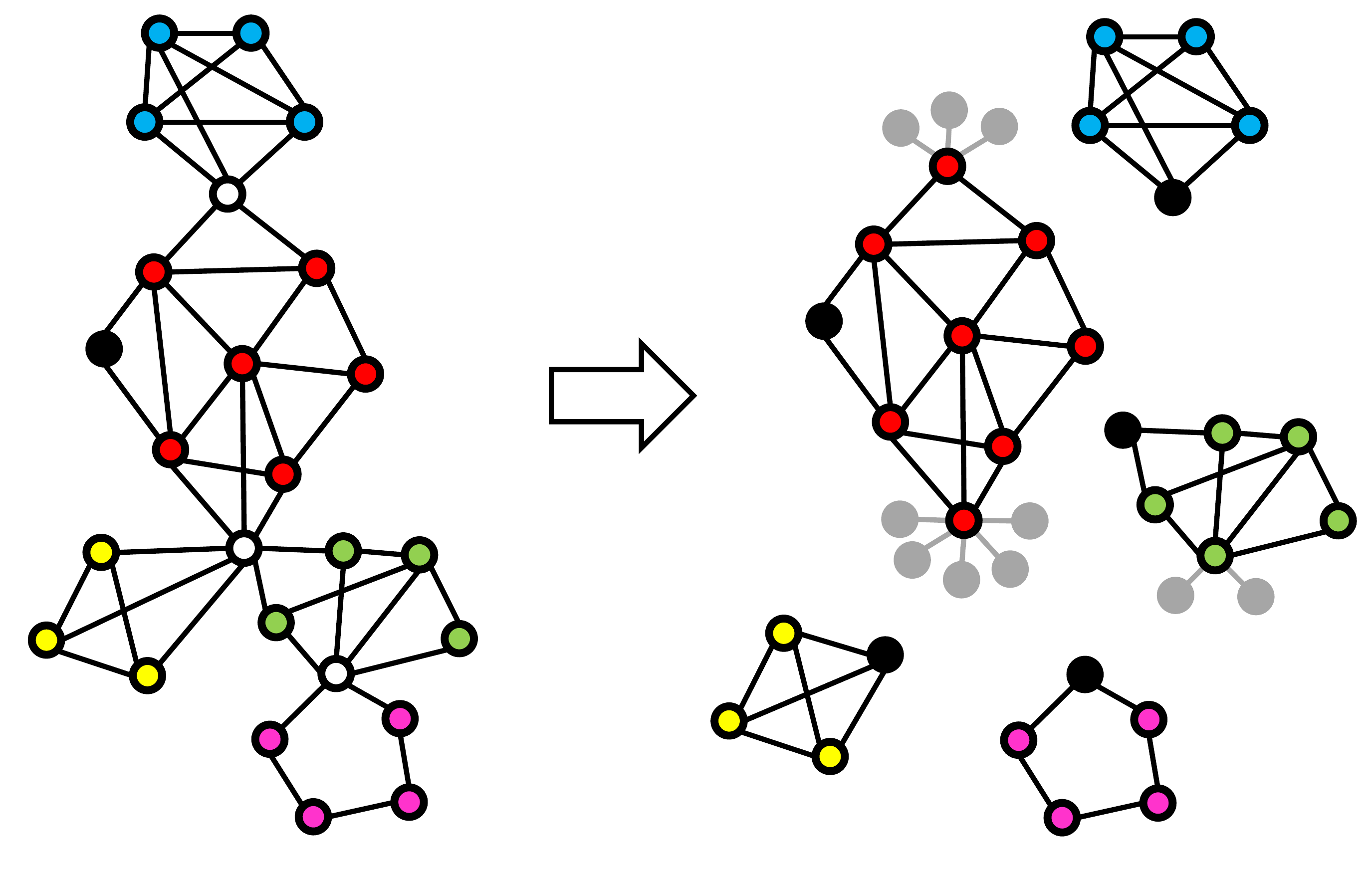}
\caption{Decomposition of graph into biconnected components.
Each biconnected component is marked with different color, with cut nodes marked white and seed nodes marked black.
Gray nodes mark stubs added to provide proper influence sum for original cut nodes.}
\label{fig:decomposition}
\end{figure}

The strategic diffusion process in every biconnected component can be considered separately.
This is because every biconnected components has only one possible starting point of the diffusion (being it either the seed node in case of biconnected component containing it or the cut node closest to the seed node in case of all other components).
This is indicated by black color of a node in Figure~\ref{fig:decomposition}.
Once this starting point of a given component $C$ is activated, the activation state in other biconnected components of the network does not affect activation  is $C$.

This is because activation probability of a node is only affected by the state of activation of its neighbors and the value of $w_i$ (to remind the reader, $w_i=\sum_{j \in N(i)}w_{ji}$).
Only cut nodes have neighbors in other biconnected components, hence for all non-cut nodes this observation is trivial.
As for any cut node $i$, notice that its neighbors in other biconnected components can only be activated after $i$ is activated (as all the paths between them and the seed node run through the cut node).
Therefore the only way in which neighbors of a cut node $i$ in other biconnected components affect the activation probability of $i$ is adding to the value of $w_i$.
To account for this fact, we create additional stubs connected to copies of $i$ that are not starting points of activation during the biconnected component decomposition (marked gray on Figure~\ref{fig:decomposition}).

To remind the reader, in this work we consider undirected networks (notice that the definition of biconnected components is different for directed networks).
However, we do not assume that the relation of influence is symmetric, \ie, it is possible that for $ij \in E$ we have $w_{ij} \neq w_{ji}$.
Since we consider the decomposition into disconnected components in the context of a given seed node, there is never a disambiguation in terms of which edge weights have to be used in case of the cut nodes.
To compute the activation probability of a cut node $i$, for each neighbor $j$ only weight $w_{ji}$ is taken into consideration.
What is more, only neighbors of $i$ that are closer to the seed node than $i$ can have a positive contribution into its activation probability, as the neighbors of $i$ that are further away from the seed node can only be activated after $i$ is activated.
At the same time, $i$ affects the activation time of its neighbor $j$ through the weight $w_{ij}$, and it can have a positive contribution both when $j$ is closer and when its further away to the source node than $i$.

\subsection{Optimal Solution for Networks with Bounded Treewidth}
\label{sec:treewidth}

We will now show a polynomial algorithm that finds an optimal way to activate a network with both treewidth and degree bounded by constants.

Let $G=(V,E,W)$ be an arbitrary network.
Let $(T,F)$ be a network where every node contains a subset of nodes from $V$, \ie, $\forall_{t \in T} t \subseteq V$ (we will call each such subset a \textit{bag}).
Such $(T,F)$ is a tree decomposition of $G$ (see ,\eg, \cite{RS84}) if and only if the following conditions are met:
\begin{itemize}
\item every node from $V$ is contained in at least one bag, \ie, $\bigcup_{t \in T}t = V$.
\item for every edge $e \in E$ there exists a bag that contains both ends of $e$, \ie, $\forall_{ij \in E} \exists_{t \in T} \{i,j\} \subseteq T$.
\item for every node $v \in V$ subnetwork of $(T,F)$ induced by bags containing $v$ is connected.
\end{itemize}

The treewidth $\tw$ of a decomposition is the size of its largest bag minus one, \ie, $\tw = \max_{t \in T}|t|-1$.
The treewidth of a network is the minimum over treewidths of all its tree decompositions. A problem is said to be {\it fixed-parameter tractable} \cite{r1999parameterized}, with respect to parameter $k,$ if any instance of the problem of size $N$ can be solved in time $f(k)\cdot N^{O(1)}$, for some computable function $f(\cdot)$.
We show that the Optimal Strategic Diffusion problem is fixed-parameter tractable with respect to the sum of treewidth and maximum degree.

\subsubsection{Full Diffusion}

To better explain the idea, we first give an algorithm for the full diffusion case.

\begin{theorem}
Let $G=(V,E,W)$ be a network with maximal degree $\dm$.
There exists an algorithm that, given a tree decomposition of treewidth $\tw$, finds an optimal way of activating nodes in $V$ in time $\bigo(\tw\dm(\tw\dm)!^2 n)$.
\end{theorem}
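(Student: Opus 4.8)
The plan is to recast the Optimal (Full) Diffusion Sequence problem as a minimum-cost linear-ordering problem and to solve it by dynamic programming over a rooted tree decomposition. A full activation sequence $\seq$ is simply a total order on $V$ with $\seq_1 = \vs$, and by the recurrence of the previous subsection its cost decomposes as $\et(\seq) = \sum_{v \neq \vs} \frac{w_v}{\sum_{u \in N(v),\, u \prec v} w_{uv}}$, where $u \prec v$ means ``$u$ is activated earlier than $v$''; an order is feasible exactly when every denominator is positive (equivalently, every prefix stays connected to $\vs$), and otherwise its cost is $\infty$. The crucial feature is that the contribution of each vertex $v$ depends only on the relative order of $v$ and its neighbours $N(v)$, so the objective is \emph{local} with respect to the graph, which is precisely what makes a treewidth-based dynamic program possible.

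First I would convert the given decomposition into a nice tree decomposition (leaf, introduce, forget, and join bags) of the same width $\tw$ and with $\bigo(n)$ bags, and root it. For a bag $t$ let $B_t$ be its vertex set and $V_t$ the union of all bags in the subtree rooted at $t$; the separator property guarantees that $B_t$ separates $V_t \setminus B_t$ from $V \setminus V_t$, so every vertex of $V_t \setminus B_t$ has all its neighbours inside $V_t$. The structural observation driving the algorithm is the following: the bags containing a fixed vertex $v$ form a connected subtree, hence at the forget bag where $v$ leaves the decomposition, every neighbour $u \in N(v)$ already lies in $V_t$ (because the bag witnessing edge $uv$ sits at or below $t$). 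Consequently the denominator $\sum_{u \in N(v),\, u \prec v} w_{uv}$, and therefore the entire contribution $\frac{w_v}{\cdots}$ of $v$, becomes fully determined precisely when $v$ is forgotten. This lets us \emph{pay} for each vertex exactly once, at its forget bag.

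Accordingly I would let the dynamic-programming state at bag $t$ record (i) the relative activation order among the vertices of $B_t$, and (ii) for each $v \in B_t$, which of the edges incident to $v$ have already been committed to come before $v$ among the vertices of $V_t$ seen so far. Since $|B_t| \le \tw + 1$ and each vertex has at most $\dm$ incident edges, the boundary objects whose relative arrangement must be tracked number $\bigo(\tw\dm)$, so the number of states per bag is $\bigo((\tw\dm)!)$; the table entry stores, for each state, the minimum cost over all orderings of $V_t$ consistent with that state, with the contributions of already-forgotten vertices paid and those of $B_t$ still pending. The transitions are then routine: a leaf initialises the seed/singleton state; an introduce bag inserts the new vertex into the recorded order (legal because an introduced vertex has no neighbours yet in $V_t \setminus B_t$) and updates the committed-predecessor data of it and its bag-neighbours; a forget bag adds the now-complete term $\frac{w_v}{\sum_{u \prec v} w_{uv}}$ (or rejects the state as infeasible if that denominator is $0$) and projects $v$ out; and a join bag fuses the tables of its two children. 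Optimising over all root states that have paid for every vertex yields $\et^*$, and back-pointers recover an optimal sequence.

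The main obstacle is the join bag, which is also where the squared factorial appears: to merge the two children I must pair a state of the left subtree with a state of the right subtree that agree on the order of $B_t$, and then combine their committed-predecessor data for the shared boundary vertices while being careful not to double-count the edges internal to $B_t$. Iterating over all $\bigo((\tw\dm)!)$ left states against all $\bigo((\tw\dm)!)$ right states, with $\bigo(\tw\dm)$ work to check compatibility and add the partial denominators, costs $\bigo(\tw\dm\,(\tw\dm)!^{2})$ per join bag, and multiplying by the $\bigo(n)$ bags gives the claimed $\bigo(\tw\dm\,(\tw\dm)!^{2}\,n)$ running time. The correctness proof is an induction up the tree showing that each entry equals the true optimum over $V_t$-orderings inducing the given interface; the delicate points to verify carefully are that the forget-time lemma really makes each vertex's cost computable exactly once, and that the join combines the two partial denominators of every boundary vertex additively, so that no contribution is lost or repeated.
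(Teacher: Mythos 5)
Your proposal is correct and shares the core idea of the paper's algorithm --- a bottom-up dynamic program over a rooted tree decomposition whose states are orderings of $\bigo(\tw\dm)$ local objects, exploiting the fact that the contribution $w_v/\sum_{u \in N(v),\, u \prec v} w_{uv}$ of a vertex depends only on its order relative to its own neighbours --- but your bookkeeping is genuinely different. The paper works on the given (rooted) decomposition directly: its state $\Upsilon[t]$ consists of full permutations of the closed neighbourhood $t \cup \bigcup_{i \in t} N(i)$, parent and child are linked by a compatibility relation on common elements, and double counting is avoided by explicitly subtracting the already-paid terms $\sum_{\seq'_i \in \seq'|_t} \et[t',\seq',\seq'_i]$ inside the recurrence for $\et^*[t,\seq]$. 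You instead pass to a nice tree decomposition, shrink the state to an order on $B_t$ together with committed-predecessor data, and charge each vertex exactly once at its forget bag; this removes the subtraction step entirely and concentrates the $(\tw\dm)!^{2}$ pairing work at join nodes, at the price of the (standard, width-preserving, $\bigo(n)$-bag) preprocessing into a nice decomposition. Your forget-time lemma is exactly the separator property the paper uses implicitly when it insists that every permutation in $\Upsilon[t]$ contain all neighbours of every bag vertex, and your correctness induction is the routine sufficient-statistic argument for nice decompositions. Two details worth making explicit when you write this up: first, at a forget bag you only need the partial sum $\sum_{u \prec v} w_{uv}$ (equivalently the subset of $N(v)$ already placed before $v$), not the order among those predecessors, which is what keeps the state count within $\bigo((\tw\dm)!)$; second, the feasibility bookkeeping (rejecting a state when the denominator vanishes at forget time) automatically forces $\vs$ to be globally first, so no separate seed constraint is needed.
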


In what follows, we assume that the tree decomposition is rooted in some node $t_R$, the same for bottom-up and top-down order.
We use the following notation:
\begin{itemize}
\item $c(t)$ denotes the sequence of children of $t$;
\item $\seq|_{X}$ denotes the subsequence of $\seq$ consisting only of the nodes in set $X$;
\item $\seq_{\triangleleft x}$ denotes the subsequence of $\seq$ consisting of all elements preceding $x$;
\item $\seq_{\blacktriangleright x}$ denotes the subsequence of $\seq$ consisting of $x$ as well as all elements following $x$;
\item $T_{topdown}$ denotes sequence of nodes in $T$ in a top-down order.
\end{itemize}

In what follows we will call two permutations $\seq$ and $\seq'$ of nodes from $G$ \textit{compatible}, denoted $\seq\sim\seq'$,  if and only if they activate the nodes that they have in common in the same order, \ie, $\seq|_{\seq'}=\seq'|_{\seq}$.

For each node of $t$ the tree decomposition $T$ we compute a record consisting of the following fields:
\begin{itemize}
\item $\Upsilon[t]$ stores all permutations of nodes in bag $t$ and their neighbors that can be a part of a valid solution to the problem;
\item $\et[t,\seq,i]$ stores the expected time of activation of every node in $i \in t$ when activated in the order given by $\seq$;
\item $\et^*[t,\seq]$ stores the smallest expected time to activate all nodes in the subnetwork of $G$ induced by the nodes in the subtree of $T$ rooted at $t$, when the nodes in $t$ and their neighbors are activated in the order given by $\seq$.
\end{itemize}

To compute the records we traverse the tree decomposition in a bottom-up order (given the root $t_R$) and for every node $t$ we perform the following steps:
\begin{enumerate}
\item \label{proc1-step1}
Compute:
$$
\Upsilon[t] = \bigg\{ \seq \in \Seq^*(t \cup \bigcup_{i \in t}N(i)) :
	\left(\vs \notin t \lor \seq_1 = \vs\right)
	\land \left(\forall_{\substack{\seq_i \in t :\\ \seq_i \neq \vs}} \exists_{j < i} \seq_j \in N(\seq_i)\right)
	\land \left(\forall_{t' \in c(t)}\exists_{\seq' \in \Upsilon[t']} \seq'\sim \seq\right)
\bigg\}.
$$

\item \label{proc1-step2}
For every $\seq \in \Upsilon[t]$ and every $\seq_i \in \seq|_t$ compute:
$$
\et[t,\seq,\seq_i] = \frac{w_{\seq_i}}{\sum_{j < i}w_{\seq_j\seq_i}}.
$$

\item \label{proc1-step3}
For every $\seq \in \Upsilon[t]$ compute:
$$
\et^*[t,\seq] =
\sum_{\seq_i \in \seq|t}\et[t,\seq,\seq_i]
+ \sum_{t' \in c(t)} \min_{\seq' \in \Upsilon[t'] : \seq'\sim \seq}
\left(\et^*[t',\seq'] - \sum_{\mathclap{\seq'_i \in \seq'|_t}} \et[t',\seq',\seq'_i]\right).
$$
\end{enumerate}

After traversing the tree in this fashion, the minimal time required to activated entire network $G$ starting with $\vs$ can be identified as $\min_{\seq \in \Upsilon[t_R]} \et^*[t_R,\seq]$.
In order to reconstruct the sequence allowing to activate entire network in the optimal time, we can run Algorithm~\ref{alg:reconstruct-full}.

\begin{algorithm}[t]
\LinesNumbered
\DontPrintSemicolon
\SetAlgoNoEnd
\SetAlgoNoLine
\KwIn{Tree decomposition $(T,F)$ of a weighted network $(V,E,W)$, with computed values of $\et^*$.}
\KwOut{Sequence of activation of nodes in $V$ starting with $\vs$ with minimal expected time of activation.}
$\seq^* \gets \langle \rangle$\;\label{alg1-ln12}
\For{$t \in T_{topdown}$}{\label{alg1-ln13}
	$\seq \gets \argmin_{\seq' \in \Upsilon[t] : \seq'\sim\seq^*}\et^*[t,\seq']$\;\label{alg1-ln14}
	$\seq' \gets \langle \rangle$\;\label{alg1-ln15}
	\For{$\seq_i \in \seq$}{\label{alg1-ln16}
		\If{$\seq_i \notin \seq^*$}{\label{alg1-ln17}
			$\seq' \gets \seq' \oplus \langle \seq_i \rangle$\;\label{alg1-ln18}
		} \Else {\label{alg1-ln19}
			$\seq^* \gets \seq^*_{\triangleleft\seq^*_i} \oplus \seq' \oplus \seq^*_{\blacktriangleright\seq^*_i}$\;\label{alg1-ln20}
			$\seq' \gets \langle \rangle$\;\label{alg1-ln21}
		}
	}
	$\seq^* \gets \seq^* \oplus \seq'$\;\label{alg1-ln22}
}
\Return $\seq^*$\;\label{alg1-ln23}
\caption{Algorithm reconstructing the optimal way of activating all nodes of a network with bounded treewidth and degree.}
\label{alg:reconstruct-full}
\end{algorithm}

Let us now comment on the procedure of filling the records.

In step~\ref{proc1-step1} we gather all permutations of nodes in bag $t$ and their neighbors that can be a part of a valid solution to the problem, according to three conditions.
The first condition asserts that if the permutation contains the seed node, it is activated as the very first node.
The second condition assures that every node in bag $t$ other than the source node has at least one active neighbor at the moment of activation (notice that for every node in $t$ all of its neighbors are present in the permutation).
The third condition provides that the permutation $\seq$ allows to activate all nodes in the subtree of $t$, \ie, that for every child of $t$ in the tree decomposition there exists at least one valid permutation $\seq'$ that activates nodes from $\seq$ in the same order as $\seq$.

In step~\ref{proc1-step2} we simply compute the time of activation of every node in bag $t$, according to the definition given in Section~\ref{sec:problem-definition}, and store it in table $\et$.
Again, notice that for every node in $t$ all of its neighbors are present in the permutation, hence we have enough information to compute its expected time of activation.

Finally, in step~\ref{proc1-step3} we compute the optimal time of activation of all nodes in the subtree of $t$ while using permutation $\seq$ and store it in table $\et^*$.
The expression consists of a sum of time of activation of the nodes in bag $t$ and a sum over all children of $t$, where for each of them we compute the minimum over all compatible permutations and subtract the time of activation of nodes in $t$.
Notice that since $(T,F)$ is the tree decomposition, the only possible overlap between nodes in the subtrees of different children of $t$ are nodes in bag $t$.
Otherwise the graph induced by bags containing such overlapping nodes would not be connected, and hence one of the conditions of being the tree decomposition would not be met for $(T,F)$.

After having filled tables $\Upsilon$, $\et^*$ and $\et$, we traverse the tree decomposition again, this time in a top-down order using Algorithm~\ref{alg:reconstruct-full}.
We construct an optimal solution to the problem on variable $\seq^*$.
In line~\ref{alg1-ln14} we select as $\seq$ the permutation with minimal time necessary to activate all nodes in the subtree of $t$, that is compatible with the solution constructed so far.
In lines~\ref{alg1-ln15}-\ref{alg1-ln22} we merge $\seq$ into sequence $\seq^*$ using auxiliary variable $\seq'$.

As for the time complexity of the algorithm, the most costly operations (both of which are equally expensive) are validating existence of a compatible sequence for every child of $t$ in the third condition in step~\ref{proc1-step1} and computing the time necessary to activate the nodes in the subtrees of all children of $t$ in the second sum in step~\ref{proc1-step3}.
For assessing the time complexity of the algorithm we will focus on the cost in step~\ref{proc1-step3}.
Since every node $t'$ is a child of at most one other node in the tree decomposition, for every $t' \in T$ the computation of minimum is executed $\bigo((\tw\dm)!)$ times (the number of different permutations $\seq \in \Upsilon[t]$).
Since there are $\bigo(n)$ nodes in the tree decomposition, the computation of minimum is executed $\bigo((\tw\dm)!n)$ times.
The cost of executing it once is $\bigo(\tw\dm(\tw\dm)!)$, since we have to check $\bigo((\tw\dm)!)$ many permutations in $\Upsilon[t']$ and for each of them check whether $\seq'|_{\seq}=\seq|_{\seq'}$ in time $\bigo(\tw\dm)$.
Hence, the total time complexity of the algorithm is $\bigo(\tw\dm(\tw\dm)!^2 n)$.

\subsubsection{Partial Diffusion}

We next present an algorithm for partial diffusion in networks with bounded treewidth and bounded degree.
\begin{theorem}
Let $G=(V,E,W)$ be a network with maximal degree $\dm$.
There exists an algorithm that, given a tree decomposition of treewidth $\tw$, finds an optimal way of activating $z$ nodes in $V$ in time $\bigo(\tw\dm(\tw\dm)!^2 z^2n)$.
\end{theorem}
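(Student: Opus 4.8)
The plan is to extend the bottom-up dynamic program of the full-diffusion case by adding a \emph{budget} dimension that records how many nodes are activated within each subtree. Concretely, I would replace the record $\et^*[t,\seq]$ by a three-argument record $\et^*[t,\seq,m]$ storing the smallest expected time to activate exactly $m$ nodes in the subnetwork induced by the subtree of $T$ rooted at $t$, subject to the bag nodes and their neighbors being activated in the order given by $\seq$. Since in partial diffusion not every node of $t\cup\bigcup_{i\in t}N(i)$ need be activated, $\Upsilon[t]$ must now range over \emph{partial} permutations, i.e.\ orderings of arbitrary subsets of $t\cup\bigcup_{i\in t}N(i)$ satisfying the same three admissibility conditions as in step~\ref{proc1-step1} (seed first if present; every activated bag node other than $\vs$ has an earlier activated neighbor; a compatible witness in each child). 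The number of such partial permutations is still $\bigo((\tw\dm)!)$, since $\sum_{s}\binom{\tw\dm}{s}s!=\bigo((\tw\dm)!)$, so the per-bag permutation count is unchanged from the full case.

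Next I would adapt the compatibility relation and the recurrence of step~\ref{proc1-step3}. Compatibility $\seq'\sim\seq$ must now demand not only that the two orderings agree on the relative order of their common nodes, but also that a shared node is activated in $\seq$ if and only if it is activated in $\seq'$, which keeps the activation status of every node consistent across all bags containing it. The recurrence for $\et^*[t,\seq,m]$ then sums the activation times of the activated bag nodes (exactly as in step~\ref{proc1-step2}) and adds a minimum over all ways of distributing the remaining budget among the children. Because the subtrees of distinct children overlap only in bag $t$, I would count each activated bag node once, inside the parent's own contribution, and let each child account only for the nodes of its subtree lying outside $t$; the child budgets then partition $m$ minus the activated-bag count, and combining the children is a standard knapsack-style convolution over the budget, processed one child at a time.

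For the complexity, recall that the full-diffusion analysis bounds the dominant work by $\bigo((\tw\dm)!)$ parent permutations times $\bigo(n)$ children, each child contributing a minimum over $\bigo((\tw\dm)!)$ compatible permutations at cost $\bigo(\tw\dm)$ each. Here every such minimum must be evaluated for each pair consisting of a parent budget $m\in\{0,\dots,z\}$ and a child budget $m'\in\{0,\dots,z\}$, which multiplies the cost of each combination step by $z^2$; summing as before yields the claimed $\bigo(\tw\dm(\tw\dm)!^2 z^2 n)$ bound. The optimal value is read off as $\min_{\seq\in\Upsilon[t_R]}\et^*[t_R,\seq,z]$, and the witnessing sequence is recovered by a top-down pass analogous to Algorithm~\ref{alg:reconstruct-full}, this time also propagating at each bag the budget split that attains the recorded optimum.

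The main obstacle, I expect, is the bookkeeping of the budget across shared bag nodes: one must ensure that a node common to a parent bag and several children is counted exactly once, while simultaneously forcing its activated-or-not status and its relative order to be identical in every bag that contains it. Proving that the convolution partitions precisely the non-bag nodes, so that the parameter $m$ really equals the number of activated nodes in the subtree, is the delicate step; by contrast, the activation-time computation and the compatibility-based merging should carry over almost verbatim from the full-diffusion algorithm.
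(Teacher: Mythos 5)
Your proposal matches the paper's own construction essentially step for step: the budget-indexed record $\et^*[t,\seq,k]$ over partial permutations, the strengthened compatibility relation requiring agreement on which shared nodes are activated, the one-child-at-a-time knapsack convolution that charges each activated bag node to the parent and subtracts the bag contribution from each child's value, and the extra $z^2$ factor in the running time. This is the same approach as the paper, including the reconstruction by a top-down pass that propagates the optimal budget splits.
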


We use essentially the same notation as in previous section.
However, in what follows we will call two permutations $\seq\in\Seq(X)$ and $\seq'\in\Seq(X')$ of nodes from $G$ \textit{compatible}, denoted $\seq\sim\seq'$, if and only if they activate the nodes that they have in common in the same order, \ie, $\seq|_{\seq'}=\seq'|_{\seq}$, and agree on the non-activated nodes, \ie, $\seq|_{X'\setminus\seq'}=\seq'|_{X\setminus\seq}=\langle \rangle$ (this difference is the result of considering also permutations that are not full, \ie, permutations $\seq\in\Seq(X)$ such that $\seq < |X|$).

The record that we compute for each node of $t$ the tree decomposition $T$ consisting now of the following fields:
\begin{itemize}
\item $\Upsilon[t]$ stores all permutations of nodes in bag $t$ and their neighbors that can be a part of a valid solution to the problem;
\item $\et[t,\seq,i]$ stores the expected time of activation of every node in $i \in t$ when activated in the order given by $\seq$;
\item  $\et^*[t,\seq,k]$ denotes the smallest expected time to activate $k$ nodes in the subnetwork of $G$ induced by the nodes in the subtree of $T$ rooted at $t$, when the nodes in $t$ and their neighbors are activated in the order given by $\seq$.
\end{itemize}
Hence, the only difference in comparison to the full diffusion version is that table $\et^*$ is additionally indexed with the number of nodes to activate.

To compute the records we traverse the tree decomposition in a bottom-up order (given the root $t_R$) and for every node $t$ we perform the following steps:
\begin{enumerate}
\item \label{proc2-step1}
Compute:
$$
\Upsilon[t] = \bigg\{ \seq \in \Seq(t \cup \bigcup_{i \in t}N(i)) :
	\left(\vs \notin t \lor \seq_1 = \vs\right)
	\land \left(\forall_{\substack{\seq_i \in t :\\ \seq_i \neq \vs}} \exists_{j < i} \seq_j \in N(\seq_i)\right) 
	\land \left(\forall_{t' \in c(t)}\exists_{\seq' \in \Upsilon[t']} \seq'\sim \seq\right)
\bigg\}.
$$

\item \label{proc2-step2}
For every $\seq \in \Upsilon[t]$ and every $\seq_i \in \seq|_t$ compute:
$$
\et[t,\seq,\seq_i] = \frac{w_{\seq_i}}{\sum_{j < i}w_{\seq_j\seq_i}}.
$$

\item \label{proc2-step3}
For every $\seq \in \Upsilon[t]$ and for $k=|\seq|_t|,\ldots,z$ compute the value of $\et^*[t,\seq,k]$ using Algorithm~\ref{alg:treewidth-partial-sub}.
\end{enumerate}

\begin{algorithm}[tbhp]
\LinesNumbered
\DontPrintSemicolon
\SetAlgoNoEnd
\SetAlgoNoLine
\KwIn{Tree decomposition $(T,F)$ of a weighted network $(V,E,W)$, with values of $\et^*$ computed so far, node $t$ with sequence of children $c(t)=\langle t_1,\ldots,t_{|c(t)|} \rangle$.}
\KwOut{The value of $\et^*[t,\seq,k]$.}
\For{$t_i \in \langle t_1,\ldots,t_{|c(t)|} \rangle$}{
	\For{$m=0,\ldots,k-\left|\seq|_t\right|$}{
		\If{$i=1$}{
			$\et'[t,1,\seq,m]\gets\min\limits_{\substack{\seq' \in \Upsilon[t_1]:\\ \seq'\sim\seq}}\left(\et^*[t_1,\seq',m+|\seq'|_{t}|] - \sum_{\seq'_i \in \seq'|_{t}}\et[t_1,\seq',\seq'_i]\right)$\;
		} \Else {
			$\et'[t,i,\seq,m] \gets \min\limits_{m'\in\{0,\ldots,m\}} \bigg( \et'[t,i-1,\seq,m']$\; \label{alg2-line6}
			\nonl \hspace{4cm} $+\min\limits_{\substack{\seq' \in \Upsilon[t_i] :\\ \seq'\sim\seq}}\left(\et^*[t_i,\seq',m-m'+|\seq'|_{t}|] - \sum_{\seq'_i \in \seq'|_{t}}\et[t_i,\seq',\seq'_i]\right) \bigg)$\;
		}
	}
}
\Return $\sum_{\seq_i \in \seq|_{t}} \et[t,\seq,\seq_i] + \et'\left[t,|c(t)|,\seq,k-\left|\seq|_t\right|\right]$\;
\caption{Algorithm computing the value of $\et^*[t,\seq,k]$.}
\label{alg:treewidth-partial-sub}
\end{algorithm}

After traversing the tree in this fashion, the minimal time required to activate $z$ nodes in network $G$ starting with $\vs$ can be identified as $\min_{\seq \in \Upsilon[t_R]} \et^*[t_R,\seq,z]$.
In order to reconstruct the sequence allowing to activate $z$ nodes in the optimal time, we can run Algorithm~\ref{alg:reconstruct-partial}.

\begin{algorithm}[t]
\LinesNumbered
\DontPrintSemicolon
\SetAlgoNoEnd
\SetAlgoNoLine
\KwIn{Tree decomposition $(T,F)$ of a weighted network $(V,E,W)$, with computed values of $\et^*$ and $\et'$, a bag $t$, a sequence $\seq^* \in \Seq(V)$ and an integer $k \leq n$.}
\KwOut{Sequence of activation of $z$ nodes in $V$ starting with $\vs$ with minimal expected time of activation.}
\If{$k>0$}{
	$\seq \gets \argmin_{\seq' \in \Upsilon[t] : \seq' \sim\seq^*}\et^*[t,\seq',k]$\;
	$\seq' \gets \langle \rangle$\;
	\For{$\seq_i \in \seq$}{
		\If{$\seq_i \notin \seq^*$}{	
			$\seq' \gets \seq' \oplus \langle \seq_i \rangle$\;
		} \Else {
			$\seq^* \gets \seq^*_{\triangleleft\seq^*_i} \oplus \seq' \oplus \seq^*_{\blacktriangleright\seq^*_i}$\;
			$\seq' \gets \langle \rangle$\;
		}
	}
	$\seq^* \gets \seq^* \oplus \seq'$\;
	$k'\gets k-|\seq|_t|$\;
	\For{$t_i \in \langle t_{|c(t)|},\ldots,t_1 \rangle$}{ 
		$m \gets \argmin_{m'\in\{0,\ldots,k'\}} \bigg( \et'[t,i-1,\seq,m']$\;
 		\nonl \hspace{4cm} $+\min_{\substack{\seq' \in \Upsilon[t_i] :\\ \seq'\sim\seq}}\left(\et^*[t_i,\seq',k'-m'+|\seq'|_{t}|] - \sum_{\seq'_i \in \seq'|_{t}}\et[t_i,\seq',\seq'_i]\right) \bigg)$\;
		$\seq^*\gets\text{ReconstructPartial}(t_i,\seq^*,k'-m+|\seq'|_{t}|)$\;
		$k'\gets m$\;
    }
}
\Return $\seq^*$\;
\caption{ReconstructPartial$(t,\seq^*,k)$: algorithm reconstructing the optimal way of activating $k$ nodes in a network with bounded treewidth and degree.}
\label{alg:reconstruct-partial}
\end{algorithm}

Let us now comment on the procedure of filling the records.

Steps~\ref{proc2-step1} and~\ref{proc2-step2} are almost identical to those of the algorithm for full diffusion, with notable difference being that this time we consider sequences that are not necessarily full.

In step~\ref{proc2-step3} we call Algorithm~\ref{alg:treewidth-partial-sub} to compute the value of $\et^*[t,\seq,k]$, \ie, the optimal time of activating $k$ nodes in the subtree of $t$ while using permutation $\seq$.
To this end, we visit all the children of $t$, in the order $t_1,\ldots, t_{|c(t)|}$.
For $m\in\{0,\ldots,k-|\seq|_t|\}$ we compute $\et'[t,i,\seq,m]$, the minimum expected time required to activate $m$ nodes \textit{other than the ones activated in $t$}, in the subnetwork induced by union of the subtrees rooted at $t_1,\ldots,t_i$.
We do this using the values of $\et'$ computed for children of $t$ preceding $t_i$, \ie, for $t_1,\ldots,t_{i-1}$, as well as the values of $\et^*$ computed for $t_i$.
In the selection process $m'$ denotes the number of nodes activated in the subnetwork induced by union of $t_1,\ldots,t_{i-1}$, while the rest of the $m$ nodes is activated in the subtree rooted in $t_i$.
The returned value of the minimal time of activation of $k$ nodes in the subtree of $t$ while using permutation $\seq$ is computed as the sum of the time of activation of nodes from $t$ and the optimal time of activating the remaining $k-\left|\seq|_t\right|$ nodes in all $|c(k)|$ children of $t$.

The time complexity of the algorithm is similar to the full diffusion case, except that we have now the two additional loops, one over $k$ in step~\ref{proc2-step3} and the other over $m$ in line~\ref{alg2-line6} of Algorithm~\ref{alg:treewidth-partial-sub}, which contribute an additional factor of $\bigo(z^2)$ to the running time.

We are unable to compute an optimal solution in polynomial time in general case.
We may however hope to find a polynomial approximation algorithm.
We now move to the analysis of possible ways of approximating the optimal solution.


We now move to assessing the possibility of approximating the optimal solution to the Optimal Diffusion Sequence Problem.

\label{sec:approximation}
\subsection{Lower Bounds on Heuristic Algorithms}

Alshamsi~\etal~\cite{alshamsi2018optimal} suggest two heuristic strategies for activating a network in a process of strategic diffusion: the greedy strategy (always target node with the highest probability of activation) and the majority strategy (always target node with the highest number of active neighbors).

We now show that neither of these strategies approximates an optimal solution to within a constant ratio.

\begin{theorem}
\label{thrm:greedy}
There exists no constant $r > 1$ such that choosing the sequence of activation using the greedy strategy is an $r$-approximation algorithm. In particular, the approximation ratio of the greedy strategy is $\Omega(\log n)$.
\end{theorem}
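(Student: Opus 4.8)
The plan is to exhibit a family of weighted networks $\{G_m\}_{m}$, each with a seed $\vs$, on which the greedy strategy is provably forced into a bad activation order whose total expected time $\et(\seq_{\text{greedy}})$ exceeds the optimum by a factor that grows without bound --- indeed like $\log n$ --- as $m\to\infty$. The gadget I would use is a \emph{decoy--hub} construction. Besides $\vs$ it contains a single hub node $H$, a large set of target nodes $a_1,\ldots,a_m$, and a pool of ``stub'' nodes attached only to $H$. Each target $a_i$ is joined to $\vs$ with influence $1$ and to $H$ with the \emph{asymmetric} influences $w_{H a_i}=L$ and $w_{a_i H}=0$, while $H$ is joined to $\vs$ with influence $1$ and to exactly $L+1$ stubs with influence $1$ each, so that $w_{a_i}=1+L$ and $w_H=L+2$. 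The asymmetry is the whole point: a target is cheap only once $H$ is active ($\et(a_i)$ drops from $1+L$ to $1$), yet activating targets contributes nothing to $H$'s activation probability, so greedy can never be ``accidentally rescued'' by its own myopic choices.

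First I would verify that greedy is deterministically forced down the bad branch. Immediately after the seed, the only activatable nodes are $H$ (with $\et(H)=w_H=L+2$) and the targets (with $\et(a_i)=1+L$ each); since $1+L<L+2$, greedy strictly prefers a target. Because $w_{a_iH}=0$ and the targets are mutually non-adjacent, activating any number of targets changes neither $\et(H)$ nor the expected time of the remaining targets, so this strict preference persists and greedy activates all $m$ targets first, each at cost $1+L$. Only then does it activate $H$ (still at cost $L+2$, since no target ever supported it) and finally the $L+1$ stubs at cost $1$ each, giving a greedy total of $m(1+L)+\bigo(L)$.

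Next I would bound the optimum from above by the obvious hub-first sequence: activate $H$ (cost $L+2$), then every target (cost $1$ each, now that $H$ is active), then the stubs (cost $1$ each), for a total of $m+\bigo(L)$. The ratio $\et(\seq_{\text{greedy}})/\opt$ is therefore at least $\tfrac{m(1+L)+\bigo(L)}{m+\bigo(L)}=\Theta(L)$ for $m\gg L$. Since $n=m+L+\bigo(1)$, choosing $L=\lceil\log m\rceil$ keeps the instance of polynomial size and makes the ratio $\Theta(\log n)$; letting $m\to\infty$ shows it is unbounded, so no constant $r>1$ can bound it, and the tuning of $L$ pins the growth at $\Omega(\log n)$.

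The main obstacle is the forcing argument rather than the arithmetic: one must rule out every way in which greedy might stumble into a good choice. Here this reduces to the two invariants that (i) no target ever provides influence to $H$ and (ii) targets never influence one another, which together freeze all relevant expected times throughout the target phase; the asymmetric weights are exactly what guarantee these invariants, circumventing the monotonicity (activating a node can only lower its neighbors' future costs) that otherwise keeps greedy competitive. A secondary subtlety is tie-breaking: I would keep every comparison strict (e.g.\ $w_H=L+2>L+1=w_{a_i}$) so greedy's choice is unambiguous under any tie-breaking rule. I would also remark that taking $L=m$ pushes the same construction to a \emph{linear} gap, so the $\Omega(\log n)$ bound stated here is conservative; the logarithmic tuning is what aligns it with the inapproximability threshold of Theorem~\ref{thrm:logn}.
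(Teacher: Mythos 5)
Your construction is correct and proves the theorem as stated, but it takes a genuinely different route from the paper. The paper uses an \emph{unweighted} gadget: a seed $\vs$ joined to $k^2$ nodes $a_i$, each of which is joined to $k-1$ nodes $b_j$, all weights equal to $1$. There, greedy is not frozen into a single bad branch; instead one proves a counting lemma (at least $ik$ of the $a$-nodes must be activated before $b_i$), which forces the $a$-nodes to be activated in batches of $k$ with $1,2,\ldots,k$ active neighbors respectively, giving a harmonic-sum lower bound $k^2H_k$ against an alternative of cost $3k^2-2k$, hence a ratio of $\Omega(H_k)=\Omega(\log n)$. Your decoy--hub gadget instead exploits the asymmetric influences the weighted model permits: setting $w_{a_iH}=0$ makes greedy's preferences literally invariant throughout the target phase, so the forcing argument is immediate and the gap is a clean, tunable $\Theta(L)$ --- indeed, as you note, $L=m$ pushes it to $\Omega(n)$, strictly stronger than the paper's bound \emph{in the weighted setting}. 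The trade-off is coverage: the greedy and majority heuristics were proposed by Alshamsi~\etal~\cite{alshamsi2018optimal} for the unweighted model ($w_{ij}=1\iff ij\in E$), and the paper's construction shows the $\Omega(\log n)$ ratio already holds there, whereas your argument collapses without the zero weights (with $w_{a_iH}>0$ each activated target feeds the hub and greedy recovers). So your proof establishes the theorem for the general weighted problem but says nothing about the unweighted special case; the paper's harmonic-sum argument is what buys that extra strength, and its gadget is additionally reused verbatim for the majority heuristic in Theorem~\ref{thrm:majority}, where your gadget would not obviously work (after the seed, the hub and every target have exactly one active neighbor, so majority's behavior hinges on tie-breaking).
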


\begin{proof}
We will now show a series of networks where the approximation ratio of the greedy algorithm goes to infinity.

For a given $k \in \N$ we construct a network $G(k)$, where the set of nodes is $V=\{\vs,a_1,\ldots,a_{k^2},b_1,\ldots,b_{k-1}\}$ and where the set of edges is:
$$
E=\bigcup_{i=1}^{k^2}\{\vs a_i\} \cup \bigcup_{i=1}^{k^2}\bigcup_{j=1}^{k-1}\{a_i b_j\}.
$$
We assume that the weight of every edge is $1$.
The structure of network $G(k)$ is presented in Figure~\ref{fig:greedy-majority}.

\begin{figure}[tbh]
\centering
\includegraphics[width=.5\linewidth]{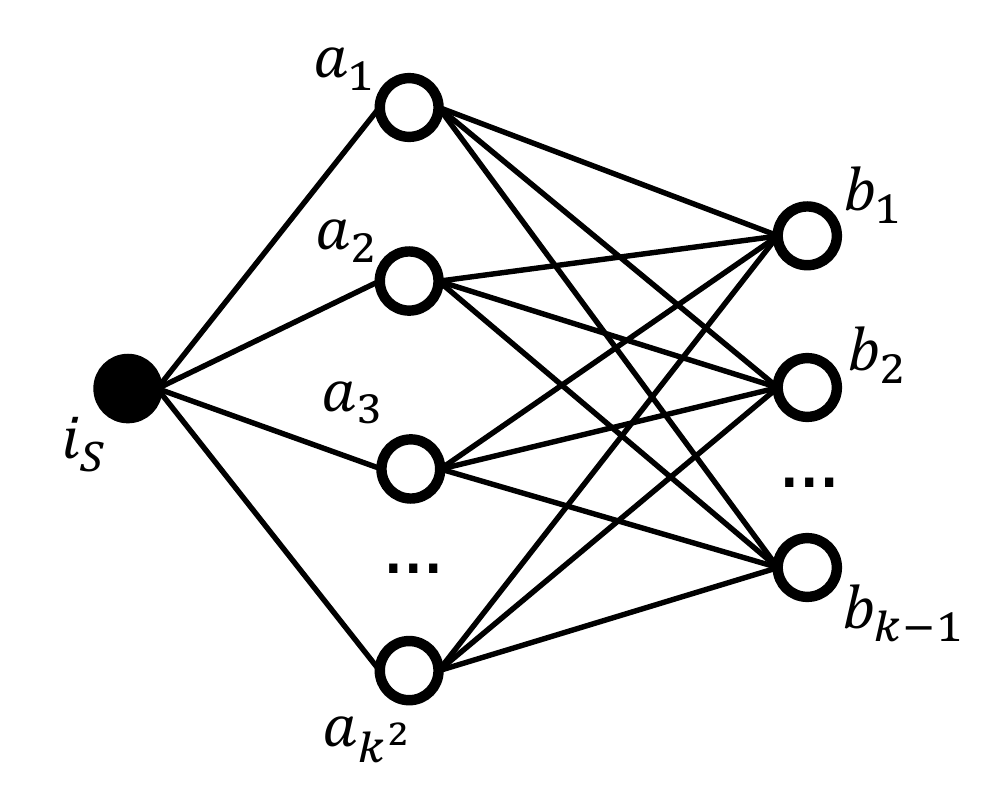}
\caption{Network $G(k)$ constructed for the proofs of Theorems~\ref{thrm:greedy} and~\ref{thrm:majority}.}
\label{fig:greedy-majority}
\end{figure}

Let $x$ be the number of currently activated $a_i$ nodes and let $y$ be the number of currently activated $b_i$ nodes.
We have that $\et(a_i) = \frac{k}{y+1}$ and we have that $\et(b_i) = \frac{k^2}{x}$.

We will now analyze two different ways of activating all nodes in network $G(k)$.
First, let us consider the greedy algorithm.
Let the indices of nodes $a_i$, as well as the indices of nodes $b_j$, be ordered according to the order of activation, \ie, node $a_1$ is the first activated node $a_i$, while node $b_1$ is the first activated node $b_i$.
Notice that at least $ik$ nodes $a$ has to be activated before the activation of node $b_{i}$.
We will prove this claim by contradiction.
Assume that $b_{i}$ can have $j<ik$ active neighbors at the moment of activation.
Take node $a_{j+1}$ (the first node $a$ activated after $b_{i}$).
At the moment of activation of $b_{i}$ its probability of activation is $\frac{j}{k^2}$, while the probability of activation of $a_{j+1}$ is $\frac{i}{k}$.
Since we are using greedy algorithm and $b_{i}$ was activated before $a_{j+1}$, we have to have $\frac{j}{k^2} \geq \frac{i}{k}$.
However, this is not true since $j<ik$.
We have a contradiction, therefore at least $ik$ nodes $a$ has to be activated before the activation of node $b_{i}$.

Because of this at least $k$ nodes $a_i$ has to be activated with only one neighbor active, then at least $k$ nodes $a_i$ has to be activated with only two neighbors active, and so on.
Focusing only on the time of activation of nodes $a_i$ we have that the total expected time of activation for the greedy algorithm is:
$$
\et_G(G(k)) \geq \sum_{i=1}^{k^2} \et(a_i) \geq \sum_{j=1}^{k} k \frac{k}{j}  = k^2 H_k
$$
where $H_k$ is the $k$-th harmonic number.

Now, let us consider an algorithm $A$, where we first activate $k$ nodes $a_i$, then all nodes $b_i$ and finally the remaining nodes $a_i$.
Time of activation of the entire network is then:
$$
\et_A(G(k)) = k^2+(k-1)k+(k^2-k) = 3k^2-2k
$$
since each of the first $k$ nodes $a_i$ is activated in expected time $k$ (as it only has one active neighbor, namely $\vs$), each node $b_i$ is activated in expected time $k$ (as it has degree $k^2$ and $k$ active neighbors at the moment of activation) and each of the remaining $k^2-k$ nodes $a_i$ is activated in expected time $1$.

Consider sequence of networks $G(k)$.
For this sequence we have
$$
\lim_{k \to \infty} \frac{\et_G(G(k))}{\opt(G(k))} \geq \lim_{k \to \infty} \frac{\et_G(G(k))}{\et_A(G(k))} \geq \lim_{k \to \infty} \frac{H_k}{3} = \infty.
$$
Therefore, solution provided by the greedy algorithm can be arbitrarily worse than the optimal solution.
\end{proof}

We also show analogical result for the majority strategy.

\begin{theorem}
\label{thrm:majority}
There exists no constant $r > 1$ such that choosing the sequence of activation using the majority strategy is an $r$-approximation algorithm. In particular, the approximation ratio of the majority strategy is $\Omega(\log n)$.
\end{theorem}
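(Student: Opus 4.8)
The plan is to reuse the very same family of networks $G(k)$ constructed in the proof of Theorem~\ref{thrm:greedy}. There the expected activation times $\et(a_i)=\frac{k}{y+1}$ and $\et(b_j)=\frac{k^2}{x}$ are already established, where $x$ and $y$ denote the numbers of currently active $a$- and $b$-nodes, and the explicit schedule $A$ already gives the upper bound $\opt(G(k))\le\et_A(G(k))=3k^2-2k$. It therefore suffices to show that the majority strategy spends time $\Omega(k^2\log k)$ on $G(k)$; this pushes the ratio to infinity and, since $n=k^2+k$, yields the claimed $\Omega(\log n)$ bound.

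First I would describe the behaviour of the majority rule on $G(k)$ purely through the two counters $x$ and $y$. Every inactive $a$-node has exactly $1+y$ active neighbours (the seed $\vs$ together with the active $b$-nodes), while every inactive $b$-node has exactly $x$ active neighbours. Hence the rule activates an $a$-node only when $1+y\ge x$ and a $b$-node only when $x\ge 1+y$, with ties broken arbitrarily. The crucial point is that this keeps the two counters balanced: I would prove by induction on the number of activations that $x\le y+2$ holds at every step up to and including the activation of the last $b$-node. Indeed, an $a$-node is activated only when $x\le 1+y$, so $x$ can surpass $y+1$ by at most one step, after which $x>1+y$ and the rule is obliged to activate a $b$-node.

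From this invariant I obtain tight control on the state in which each $b_j$ (the $j$-th $b$-node to be activated) is turned on: at that moment $y=j-1$, the rule requires $x\ge 1+y=j$, and the invariant gives $x\le y+2=j+1$, so $x\in\{j,j+1\}$ independently of the tie-breaking policy. Consequently $\et(b_j)=\frac{k^2}{x}\ge\frac{k^2}{j+1}$, and summing over the $k-1$ $b$-nodes alone gives
$$
\et_M(G(k))\ \ge\ \sum_{j=1}^{k-1}\frac{k^2}{j+1}\ =\ k^2\,(H_k-1),
$$
where $H_k$ is the $k$-th harmonic number. Combining with $\opt(G(k))\le 3k^2-2k$ yields $\lim_{k\to\infty}\et_M(G(k))/\opt(G(k))=\infty$, and since $H_k=\Theta(\log k)=\Theta(\log n)$ this also gives the $\Omega(\log n)$ ratio.

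The main obstacle I expect is making the balance invariant fully rigorous in the presence of arbitrary tie-breaking and of the degenerate phases at the start and end of the process. In particular I must verify that inactive $a$-nodes always exist while $b$-nodes are still being activated (true because $x\le k+1\ll k^2$ up to the last $b$-node), so that the comparison underlying ``activate an $a$-node only when $1+y\ge x$'' is always genuine and the induction step is valid; and I should confirm that the $a$-nodes activated after all $b$-nodes are on contribute only $\bigo(k^2)$, which is dominated by $k^2 H_k$ and hence irrelevant to the asymptotics. These are the places where the argument could slip, but once the invariant is secured the summation and the limit are routine.
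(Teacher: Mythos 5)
Your proposal is correct and follows essentially the same route as the paper: the same family $G(k)$, the same comparison schedule $A$ with $\et_A(G(k))=3k^2-2k$, and the same key bound that the $j$-th activated $b$-node has at most $j+1$ active neighbours, yielding the harmonic lower bound $k^2(H_k-1)$. The only difference is cosmetic --- you obtain that bound from the balance invariant $x\le y+2$ by induction (with the tie-breaking and boundary cases handled explicitly), whereas the paper argues by contradiction about the last $a$-node activated before $b_j$; both are valid.
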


\begin{proof}
We will now show that for the series of networks constructed in the proof of Theorem~\ref{thrm:greedy} the approximation ratio of the majority algorithm also goes to infinity.

Let $x$ be the number of currently activated $a_i$ nodes and let $y$ be the number of currently activated $b_i$ nodes.
We have that $\et(a_i) = \frac{k}{y+1}$ and we have that $\et(b_i) = \frac{k^2}{x}$.

Let us consider the expected activation time of an entire network $G(k)$ using majority algorithm.
Let the indices of nodes $a_i$, as well as the indices of nodes $b_j$, be ordered according to the order of activation, \ie, node $a_1$ is the first activated node $a_i$, while node $b_1$ is the first activated node $b_i$.
Notice that node $b_{i}$ at the moment of activation has at most $i+1$ active neighbors.
We will prove this claim by contradiction.
Assume that $b_{i}$ can have $j>i+1$ active neighbors at the moment of activation.
Consider number of active neighbors at the moment of activation of node $a_{j}$ (the last node $a$ activated before $b_{i}$).
Since we are using the majority algorithm, it has to have at least $j-1$ active neighbors (otherwise node $b_{i}$, having at the moment $j-1$ active neighbors, would have been chosen before node $a_{j}$).
However, node $a_{j}$ can have at most $i<j-1$ active neighbors, namely nodes $b_1, \ldots, b_{i-1}$ and node $\vs$ (as node $b_{i}$ is not active yet).
We have a contradiction, therefore $b_{i}$ at the moment of activation has at most $i+1$ active neighbors.

Focusing only on the time of activation of nodes $b_i$ we have that the total expected time of activation for the majority algorithm is:
$$
\et_C(G(k)) \geq \sum_{i=1}^{k-1} \et(b_i) \geq \sum_{i=1}^{k-1} \frac{k^2}{i+1} = k^2(H_k - 1)
$$
where $H_k$ is the $k$-th harmonic number.

Let $A$ be the alternative algorithm described in the proof of Theorem~\ref{thrm:greedy}.
Consider sequence of networks $G(k)$.
For this sequence we have
$$
\lim_{k \to \infty} \frac{\et_C(G(k))}{\opt(G(k))} \geq \lim_{k \to \infty} \frac{\et_C(G(k))}{\et_A(G(k))} \geq \lim_{k \to \infty} \frac{H_k}{3} = \infty
$$
Therefore, solution provided by the majority algorithm can be arbitrarily worse than the optimal solution.
\end{proof}

\subsection{Inapproximability}

Finally, we show that in fact the problem cannot be approximated within a ratio of $(1-\epsilon) \ln n$ for any $\epsilon > 0$, unless $P=NP$.

\begin{theorem}
\label{thrm:logn}
The Optimal Partial Diffusion Sequence problem cannot be approximated within a ratio of $(1-\epsilon) \ln n$ for any $\epsilon > 0$, unless $P=NP$.
\end{theorem}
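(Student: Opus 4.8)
The plan is to give an approximation\nobreakdash-preserving (gap\nobreakdash-preserving) reduction from Minimum Set Cover, reusing and rescaling the gadget from the proof of Theorem~\ref{thrm:npc}, and then to invoke the inapproximability of Minimum Set Cover stated in Theorem~\ref{thrm:set-cover-ptas}. Starting from an instance $(U,\cS)$, I would build essentially the same network $G$ as in Theorem~\ref{thrm:npc}: a seed $\vs$, a node $S_i$ for each set, a node $u_i$ for each element, the edges $u_iS_j$ for $u_i\in S_j$ with $w_{S_j u_i}=1$, and the $q_i$\nobreakdash-gadget on each $S_i$. The one change is to replace the weight $|U||\cS|$ on the edge $q_iS_i$ by a much larger (polynomially bounded) integer weight $W$; since this theorem does not restrict to binary weights, I may keep $W$ large without the blow\nobreakdash-up of Remark~\ref{r1}. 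With this choice every set\nobreakdash-node still satisfies $\et(S_i)=W+1$ exactly, while every element\nobreakdash-node satisfies $\et(u_i)\le|\cS|$, so that the cost of any diffusion sequence is dominated, up to an additive term of at most $|U||\cS|$, by $(W+1)$ times the number of set\nobreakdash-nodes it activates.

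With this scaling the diffusion optimum should track $\mathrm{OPT}_{SC}$. For completeness, a sequence that first activates a minimum cover $\cS^*$ and then all of $U$ has cost $|\cS^*|(W+1)+\bigo(|U||\cS|)$, giving $\et(\seq^*)\le (W+1)\,\mathrm{OPT}_{SC}+\bigo(|U||\cS|)$. For soundness, the extraction step is the key: any feasible sequence $\seq$ that activates all of $U$ must have its set\nobreakdash-nodes form a cover (each $u_i$ needs an already\nobreakdash-active neighbour $S_j$ with $u_i\in S_j$), so $\et(\seq)\ge(W+1)\,\mathrm{OPT}_{SC}$, and if moreover $\et(\seq)\le\rho\cdot\et(\seq^*)$ then the number of activated set\nobreakdash-nodes is at most $\et(\seq)/(W+1)\le\rho\,\mathrm{OPT}_{SC}+\rho\,|U||\cS|/(W+1)$. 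Choosing $W=\mathrm{poly}(|U|,|\cS|)$ large enough makes the second term $o(1)$, so reading off the activated set\nobreakdash-nodes yields a cover of size at most $(1+o(1))\rho\,\mathrm{OPT}_{SC}$. Hence a $\rho$\nobreakdash-approximation for Optimal Partial Diffusion Sequence would give a $(1+o(1))\rho$\nobreakdash-approximation for Minimum Set Cover, and taking $\rho=(1-\epsilon)\ln n$ would contradict Theorem~\ref{thrm:set-cover-ptas}.

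Two points need care, and the first is the main obstacle. Because the problem is Optimal \emph{Partial} Diffusion Sequence with a fixed quota $\goal$, I must guarantee that every near\nobreakdash-optimal sequence genuinely activates all of $U$, rather than quietly skipping the hard\nobreakdash-to\nobreakdash-cover elements in order to save on expensive set\nobreakdash-nodes (which would turn the objective into a max\nobreakdash-coverage relaxation and destroy the $\ln n$ gap). The hard part will be engineering this forcing through the choice of $\goal$ together with the gadget, so that reaching the quota is impossible without covering $U$ while the number of activated set\nobreakdash-nodes remains free to be minimized; making the completeness and soundness bounds above hold \emph{simultaneously} up to lower\nobreakdash-order terms is essentially the whole content of the reduction. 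The second point is size control: Theorem~\ref{thrm:set-cover-ptas} is stated as $(1-\epsilon)\ln|U|$, whereas the diffusion instance has $n=|V|$ nodes, so to recover $(1-\epsilon)\ln n$ I would first pad the universe (for instance by adjoining many dummy elements coverable only by a single extra set) until $|U|$ dominates both $|\cS|$ and the $\bigo(|\cS|)$ gadget nodes. This forces $\ln n=(1+o(1))\ln|U|$, after which the $(1+o(1))$ losses from both the size blow\nobreakdash-up and the extraction can be absorbed into $\epsilon$, completing the reduction.
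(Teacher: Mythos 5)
Your high-level strategy is exactly the paper's: a gap-preserving reduction from Minimum Set Cover using the Theorem~\ref{thrm:npc} gadget with the $q_i$-edge weight inflated so that each set-node costs a large amount $\alpha$ while each element-node costs at most $|\cS|$, followed by an appeal to Theorem~\ref{thrm:set-cover-ptas}. But the step you explicitly defer --- ``engineering the forcing through the choice of $\goal$ together with the gadget'' so that every feasible sequence must cover all of $U$ --- is not a technicality you can wave at; it is the one place where the construction of Theorem~\ref{thrm:npc} must actually change, and your proposal does not supply the missing idea. With a single node per element, there is no good choice of $\goal$: in the decision reduction the quota $k+|U|+1$ depended on the target cover size $k$, which is unavailable in the optimization setting; setting $\goal=|U|+1$ lets a sequence skip a hard-to-cover element and substitute one more (or, on balance, fewer) set-nodes, so the diffusion optimum tracks a partial-coverage quantity (``fewest $m$ sets covering at least $|U|-m$ elements'') rather than $\mathrm{OPT}_{SC}$, and the soundness inequality $\et(\seq)\ge(W+1)\,\mathrm{OPT}_{SC}$ you assert simply fails; setting $\goal=|\cS|+|U|+1$ forces \emph{all} set-nodes to be activated and destroys the gap entirely.

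The paper's resolution is to change the gadget, not just the weights: each element $u_i$ is replaced by $|\cS|+1$ copies $u_{i,1},\ldots,u_{i,|\cS|+1}$, each attached to the same set-nodes, and the quota is set to $\goal=|U|(|\cS|+1)+1$. Since the only activatable nodes are $\vs$, the at most $|\cS|$ set-nodes, and the element copies, skipping even one element group leaves at most $(|U|-1)(|\cS|+1)+|\cS|<|U|(|\cS|+1)$ activatable non-seed nodes, so the quota is unreachable unless the activated set-nodes form a cover --- and the copies are cheap (cost at most $|\cS|$ each), so they perturb the cost only by the additive $\goal|\cS|$ term, which is killed by taking $\alpha=\goal|\cS|^{\lambda+1}$. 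This duplication is the content of the reduction, and without it (or an equivalent device) your argument does not go through. Your second point, about converting $\ln|U|$ into $\ln n$ for the blown-up instance, is a legitimate observation (the paper handles it only implicitly, noting that the resulting ratio beats $\ln n$ for large enough $\lambda$), and your padding fix is fine; but it is secondary to the missing forcing mechanism.
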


\begin{proof}
In order to prove the theorem, we will use the result by Dinur and Steurer~\cite{dinur2014analytical} that the Minimum Set Cover problem cannot be approximated within a ratio of $(1-\epsilon) \ln n$ for any $\epsilon > 0$, unless $P=NP$.

Let $X=(U,S)$ be an instance of the Minimum Set Cover problem.
To remind the reader, $U$ is the universe $\{u_1, \ldots, u_{|U|}\}$, while $S$ is a collection $\{S_1, \ldots, S_{|S|}\}$ of subsets of $U$.
The goal here is to find subset $S^* \subseteq S$ such that the union of $S^*$ equals $U$ and the size of $S^*$ is minimal.

First, we will show a function $f(X)$ that based on an instance of the Minimum Set Cover problem $X$ constructs an instance of the Optimal Partial Diffusion Sequence problem.

Let network $G(X)$ be defined as follows (an example of such network is presented in Figure~\ref{fig:logn}):
\begin{itemize}
\item \textbf{The set of nodes:}
For every $S_i \in S$, we create three nodes, denoted by $S_i$, $q_i$ and $q'_i$.
For every $u_i \in U$, we create $|S|+1$ nodes, denoted by $u_{i,1}, \ldots, u_{i,|S|+1}$.
Additionally, we create a single node $\vs$.
\item \textbf{The set of edges:}
For every node $S_i$, we create an edge $S_i \vs$.
For every node $q_i$, we create edges $q_i S_i$ and $q_i q'_i$.
Finally, for every node $u_{i,i'}$ and every $S_j$ such that $u_i \in S_j$ we create an edge $u_{i,i'} S_j$.
\item \textbf{The weight matrix:}
For every edge $S_i q_i$ we set its weights to $w_{S_i q_i}=0$ and $w_{q_i S_i}=\alpha-1$, where $\alpha = z|S|^{\lambda+1}$ for some $\lambda > 0$.
For every edge $u_{i,i'} S_j$ we set its weights to $w_{u_{i,i'} S_j}=0$ and $w_{S_j u_{i,i'}}=1$.
For all other pairs of nodes connected with an edge we set their weights to $1$.
\end{itemize}

\begin{figure}[tbh]
\centering
\includegraphics[width=.6\linewidth]{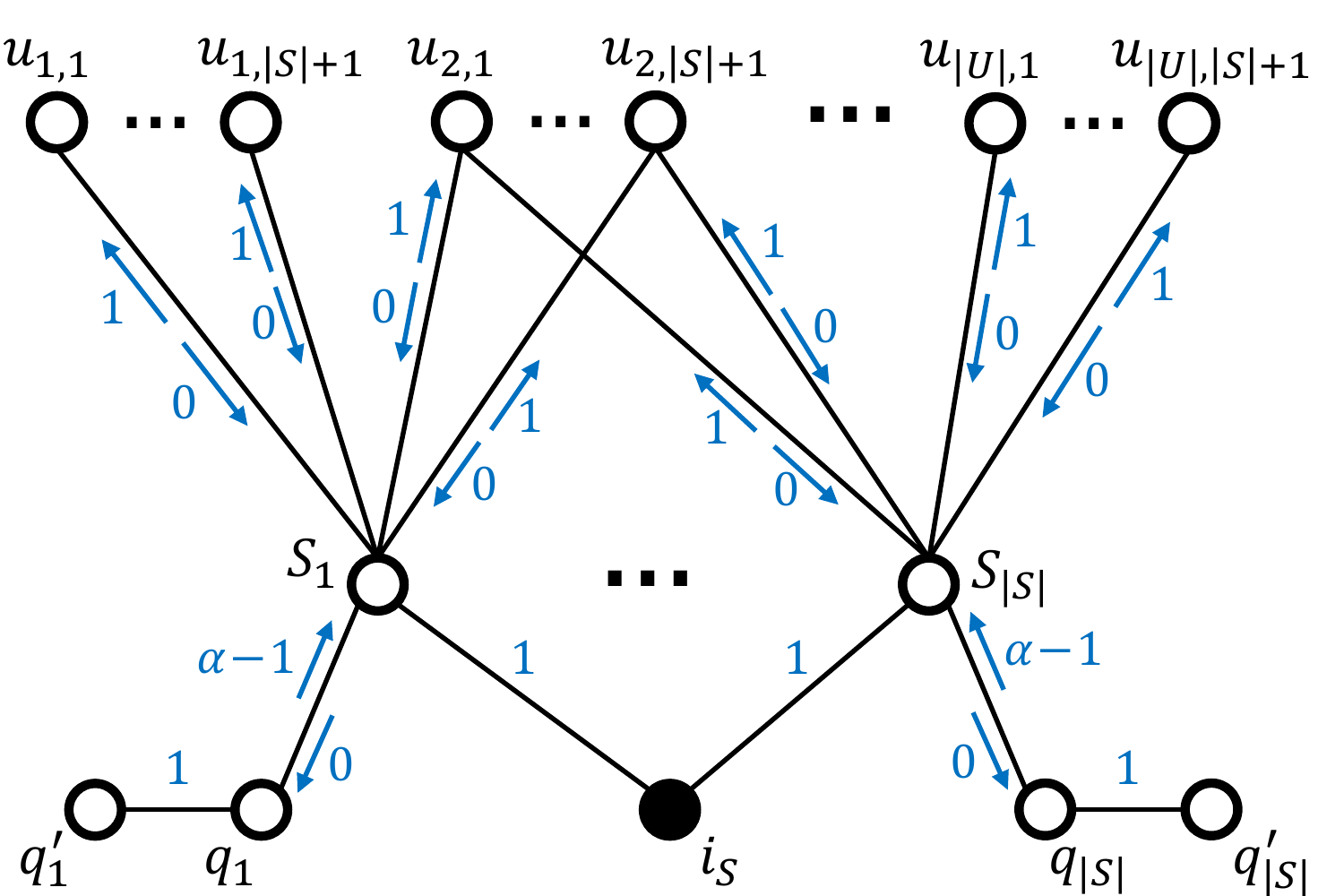}
\caption{Network $G(X,r)$ constructed for the proof of Theorem~\ref{thrm:logn}.
Blue numbers next to edges express their weights.
If there are no arrows next to edge $ij$ then $w_{ij}=w_{ji}$.
Otherwise weight $w_{ij}$ is denoted next to the arrow pointing towards node $j$, and weight $w_{ji}$ is denoted next to the arrow pointing towards node $i$.}
\label{fig:logn}
\end{figure}

To complete the constructed instance of the Optimal Partial Diffusion Sequence problem, we set the seed node to be $\vs$ and we set the number of nodes to be activated to $\goal=|U|(|S|+1)+1$.
Hence, the formula of function $f$ is $f(X)=(G(X),\vs,\goal)$.

Let $\seq$ be the solution to the constructed instance of the Optimal Partial Diffusion Sequence problem.
The function $g$ computing corresponding solution to the instance $X$ of the Minimum Set Cover problem is now $g(X,\seq^*)= S \cap \seq^*$, \ie, $S^*$ is the set of all sets $S_i$ such that their corresponding nodes $S_i$ appear in sequence $\seq^*$.

Now, we will show that $g(X,\seq^*,r)$ is indeed a correct solution to $X$, \ie, that it covers the entire universe.
Notice that none of the nodes $q_i$ nor $q'_i$ can get activated while $\vs$ is the seed node, since the influence of $S_i$ on $q_i$ is $0$.
Hence, sequence $\seq^*$ can consist only of $\vs$ and nodes in $S \cup U$.
Moreover, since we have to activate $|U|(|S|+1)$ of the nodes $S \cup U$, at least one node from each group $u_{i,1}, \ldots, u_{i,|S|+1}$ have to be activated.
The only way to do it is to activate a node $S_j$ such that $u_i \in S_j$.
Therefore, for every $u_i \in U$ there exists a node $S_j \in \seq^*$ such that $u_i \in S_j$ and $g(X,\seq)$ has to be a set cover of $U$.

We will now prove three lemmas concerning functions $f$ and $g$.

\begin{lemma}\label{lem:logn1}
Size of the solution to the given instance of the Minimum Set Cover problem returned by function $g$ is lesser or equal than the expected time of activation of the corresponding solution to the constructed instance of the Optimal Partial Diffusion Sequence problem $\seq$ divided by $\alpha$, \ie, $|g(X,\seq)| \leq \frac{\et(\seq)}{\alpha}$.
\end{lemma}
\begin{proof}
Let $\scv$ be the number of nodes $S_i$ in sequence $\seq$.
Since the expected time of activation of a node $S_i$ is always $\alpha$, we have $\et(\seq) \geq \scv \alpha$.
From the definition of $g$ we have $|g(X,\seq)|=\scv$, hence we have $|g(X,\seq)| \leq \frac{\et(\seq)}{\alpha}$.
\end{proof}

\begin{lemma}\label{lem:logn2}
For sequence $\seq$, the solution of $f(X)$, and the corresponding solution to $X$ returned by function $g$ we have $\et(\seq) \leq |g(X,\seq)| \alpha \left(1+\frac{1}{|S|^{\lambda}}\right)$.
\end{lemma}
\begin{proof}
Let $\scv$ be the number of nodes $S_i$ in sequence $\seq$.
Since the expected time of activation of a node $S_i$ is always $\alpha$ and the expected time of activation of a node $u_{i,j}$ is smaller or equal than $|S|$, we have $\et(\seq) \leq \scv \alpha + (\goal-\scv-1)|S| \leq \scv \alpha + \goal |S| \leq \scv \alpha + \scv \goal |S| = \scv \alpha + \scv \frac{\alpha}{|S|^{\lambda}}$.
From the definition of $g$ we have $|g(X,\seq)|=\scv$, hence we have $\et(\seq) \leq |g(X,\seq)| \alpha \left(1+\frac{1}{|S|^{\lambda}}\right)$.
\end{proof}

\begin{lemma}\label{lem:logn3}
An optimal solution to the constructed instance of the Optimal Partial Diffusion Sequence problem $\seq^*$ corresponds to an optimal solution to the given instance of the Minimum Set Cover problem $S^*$, \ie, $S^*=g(X,\seq^*)$.
\end{lemma}
\begin{proof}
As noted above, if $\seq$ is a solution to $f(X)$, then $g(X,\seq)$ is also a correct solution to $X$, \ie, $g(X,\seq)$ covers entire universe.
Since the expected time of activation of every node $S_i$ is $\alpha$, the expected time of activation of every node $u_{i,j}$ is smaller or equal than $|S|$, and $|S| < \alpha$, the optimal solution to $f(X)$ is the one that minimizes the number of nodes $S_i$ is $\seq$, hence, the one that minimizes $|g(X,\seq)|$.
\end{proof}

Now, assume that there exists an $r$-approximation algorithm for the Optimal Partial Diffusion Sequence problem where $r=(1-\epsilon)\ln n$ for some $\epsilon > 0$.
Let us use this algorithm to solve the constructed instance $f(X)$ and consider solution $g(X,\seq)$ to the given instance of the Minimum Set Cover problem.

We then have:
$$
|g(X,\seq)| \leq \frac{\et(\seq)}{\alpha} \leq \frac{r\et(\seq^*)}{\alpha} \leq r \left(1+\frac{1}{|S|^{\lambda}}\right)|g(X,\seq^*)| = r \left(1+\frac{1}{|S|^{\lambda}}\right)|S^*|,
$$
where first inequality comes from Lemma~\ref{lem:logn1}, second inequality comes from the fact that we consider an $r$-approximation algorithm, third inequality comes from Lemma~\ref{lem:logn2}, and the final equality comes from Lemma~\ref{lem:logn3}.
Hence, we obtained an $r \left(1+\frac{1}{|S|^{\lambda}}\right)$-approximation algorithm for the Minimum Set Cover problem, the ratio of which is lower than $\ln(n)$ for large enough $\lambda$.
However, as shown by Dinur and Steurer~\cite{dinur2014analytical}, it is not possible unless $P=NP$.
Therefore, there cannot exist such $r$-approximation algorithm for the Optimal Partial Diffusion Sequence problem.
This concludes the proof of Theorem~\ref{thrm:logn}.
\end{proof}

We now move to describing the conclusions and possible ideas for future work.

\section{Conclusions \& Future Work}
\label{sec:conclusions}

In this article we investigate the computational aspects of the strategic diffusion problem previously introduced by Alshamsi~\etal~\cite{alshamsi2018optimal}. We show that the partial diffusion problem is NP-complete in the general case, hence finding a polynomial algorithm is impossible, unless P=NP. Given this difficulty, we considered the problem from the parametrized complexity point of view and showed that the problem is fixed parameter-tractable when parametrized by the sum of the treewidth and maximum degree. On the negative side, we showed that two previously proposed heuristic algorithms for solving the problem, \ie, the greedy and the majority solutions, cannot have better than a logarithmic  approximation guarantee, even in the full diffusion case. Finally, we proved that the partial diffusion problem does not admit better than a logarithmic approximation, unless P=NP.

As for the potential ideas for future work, determining the complexity of the full diffusion problem is an interesting open question. Furthermore, our results concerning approximation algorithms are negative. Developing a polynomial algorithm with small approximation ratio would make the task of finding an efficient way of spreading the strategic diffusion much simpler. Another possible way of extending our work is showing effective algorithms finding the optimal solution in more restricted classes of networks, \eg, networks with bounded treewidth or bounded pathwidth (without the degree restriction).

The model of strategic diffusion presented in this work is based on a complex contagion process, \ie, multiple sources of exposure to the process greatly increase the probability of activation for a given node.
It would be possible to devise a similar model, but based on a simple contagion process, \eg, taking inspiration from the independent cascade model~\cite{kempe2003maximizing}.
Since simple contagion does not take multiple exposures into consideration, nodes activated earlier in the sequence would not affect the probability of activation in a given moment.
This could lead to optimal strategies significantly different than for the strategic diffusion based on complex contagion.

\section*{Acknowledgments}

This work was funded by the Cooperative Agreement between the Masdar Institute of Science and Technology (Masdar Institute), Abu Dhabi, UAE and the Massachusetts Institute of Technology (MIT), Cambridge, MA, USA---Reference 02/MI/MIT/CP/11/07633/GEN/G/00.
CAH and FLP acknowledge the support from the MIT Media Lab consortia.

\bibliographystyle{elsarticle-num}
\bibliography{bibliography-strdiff}

\begin{thebibliography}{10}
\expandafter\ifx\csname url\endcsname\relax
  \def\url#1{\texttt{#1}}\fi
\expandafter\ifx\csname urlprefix\endcsname\relax\def\urlprefix{URL }\fi
\expandafter\ifx\csname href\endcsname\relax
  \def\href#1#2{#2} \def\path#1{#1}\fi

\bibitem{alshamsi2018optimal}
A.~Alshamsi, F.~L. Pinheiro, C.~A. Hidalgo, Optimal diversification strategies
  in the networks of related products and of related research areas, Nature
  communications 9~(1) (2018) 1328.

\bibitem{valente1995network}
T.~W. Valente, Network models of the diffusion of innovations., Cresskill New
  Jersey Hampton Press, 1995.

\bibitem{rogers2010diffusion}
E.~M. Rogers, Diffusion of innovations, Simon and Schuster, 2010.

\bibitem{bailey1975mathematical}
N.~T. Bailey, et~al., The mathematical theory of infectious diseases and its
  applications, Charles Griffin \& Company Ltd, 5a Crendon Street, High
  Wycombe, Bucks HP13 6LE., 1975.

\bibitem{pastor2001epidemic}
R.~Pastor-Satorras, A.~Vespignani, Epidemic spreading in scale-free networks,
  Physical review letters 86~(14) (2001) 3200.

\bibitem{aral2017exercise}
S.~Aral, C.~Nicolaides, Exercise contagion in a global social network, Nature
  communications 8 (2017) 14753.

\bibitem{vasconcelos2019consensus}
V.~V. Vasconcelos, S.~A. Levin, F.~L. Pinheiro, Consensus and polarization in
  competing complex contagion processes, Journal of the Royal Society Interface
  16~(155) (2019) 20190196.

\bibitem{kermack1927contribution}
W.~O. Kermack, A.~G. McKendrick, A contribution to the mathematical theory of
  epidemics, Proceedings of the royal society of london. Series A, Containing
  papers of a mathematical and physical character 115~(772) (1927) 700--721.

\bibitem{kempe2003maximizing}
D.~Kempe, J.~Kleinberg, {\'E}.~Tardos, Maximizing the spread of influence
  through a social network, in: Proceedings of the ninth ACM SIGKDD
  international conference on Knowledge discovery and data mining, ACM, 2003,
  pp. 137--146.

\bibitem{goldenberg2001using}
J.~Goldenberg, B.~Libai, E.~Muller, Using complex systems analysis to advance
  marketing theory development: Modeling heterogeneity effects on new product
  growth through stochastic cellular automata, Academy of Marketing Science
  Review 2001 (2001) 1.

\bibitem{leskovec2007dynamics}
J.~Leskovec, L.~A. Adamic, B.~A. Huberman, The dynamics of viral marketing, ACM
  Transactions on the Web (TWEB) 1~(1) (2007) 5.

\bibitem{domingos2001mining}
P.~Domingos, M.~Richardson, Mining the network value of customers, in:
  Proceedings of the seventh ACM SIGKDD international conference on Knowledge
  discovery and data mining, ACM, 2001, pp. 57--66.

\bibitem{kiss2008identification}
C.~Kiss, M.~Bichler, Identification of influencers—measuring influence in
  customer networks, Decision Support Systems 46~(1) (2008) 233--253.

\bibitem{morone2015influence}
F.~Morone, H.~A. Makse, Influence maximization in complex networks through
  optimal percolation, Nature 524~(7563) (2015) 65.

\bibitem{goyal2011data}
A.~Goyal, F.~Bonchi, L.~V. Lakshmanan, A data-based approach to social
  influence maximization, Proceedings of the VLDB Endowment 5~(1) (2011)
  73--84.

\bibitem{chen2009efficient}
W.~Chen, Y.~Wang, S.~Yang, Efficient influence maximization in social networks,
  in: Proceedings of the 15th ACM SIGKDD international conference on Knowledge
  discovery and data mining, ACM, 2009, pp. 199--208.

\bibitem{hinz2011seeding}
O.~Hinz, B.~Skiera, C.~Barrot, J.~U. Becker, Seeding strategies for viral
  marketing: An empirical comparison, Journal of Marketing 75~(6) (2011)
  55--71.

\bibitem{libai2013decomposing}
B.~Libai, E.~Muller, R.~Peres, Decomposing the value of word-of-mouth seeding
  programs: Acceleration versus expansion, Journal of marketing research 50~(2)
  (2013) 161--176.

\bibitem{tong2017adaptive}
G.~Tong, W.~Wu, S.~Tang, D.-Z. Du, Adaptive influence maximization in dynamic
  social networks, IEEE/ACM Transactions on Networking (TON) 25~(1) (2017)
  112--125.

\bibitem{seeman2013adaptive}
L.~Seeman, Y.~Singer, Adaptive seeding in social networks, in: 2013 IEEE 54th
  Annual Symposium on Foundations of Computer Science, IEEE, 2013, pp.
  459--468.

\bibitem{horel2015scalable}
T.~Horel, Y.~Singer, Scalable methods for adaptively seeding a social network,
  in: Proceedings of the 24th International Conference on World Wide Web,
  International World Wide Web Conferences Steering Committee, 2015, pp.
  441--451.

\bibitem{jankowski2017balancing}
J.~Jankowski, P.~Br{\'o}dka, P.~Kazienko, B.~K. Szymanski, R.~Michalski,
  T.~Kajdanowicz, Balancing speed and coverage by sequential seeding in complex
  networks, Scientific reports 7~(1) (2017) 891.

\bibitem{jankowski2018strategic}
J.~Jankowski, M.~Waniek, A.~Alshamsi, P.~Br{\'o}dka, R.~Michalski, Strategic
  distribution of seeds to support diffusion in complex networks, PloS one
  13~(10) (2018) e0205130.

\bibitem{hidalgo2007product}
C.~A. Hidalgo, B.~Klinger, A.-L. Barab{\'a}si, R.~Hausmann, The product space
  conditions the development of nations, Science 317~(5837) (2007) 482--487.

\bibitem{guevara2016research}
M.~R. Guevara, D.~Hartmann, M.~Aristar{\'a}n, M.~Mendoza, C.~A. Hidalgo, The
  research space: using career paths to predict the evolution of the research
  output of individuals, institutions, and nations, Scientometrics 109~(3)
  (2016) 1695--1709.

\bibitem{pinheiro2018shooting}
F.~L. Pinheiro, A.~Alshamsi, D.~Hartmann, R.~Boschma, C.~Hidalgo, Shooting low
  or high: Do countries benefit from entering unrelated activities?, arXiv
  preprint arXiv:1801.05352 (2018).

\bibitem{dinur2014analytical}
I.~Dinur, D.~Steurer, Analytical approach to parallel repetition, in:
  Proceedings of the forty-sixth annual ACM symposium on Theory of computing,
  ACM, 2014, pp. 624--633.

\bibitem{karp1972reducibility}
R.~M. Karp, Reducibility among combinatorial problems, in: Complexity of
  computer computations, Springer, 1972, pp. 85--103.

\bibitem{bellman1954theory}
R.~Bellman, et~al., The theory of dynamic programming, Bulletin of the American
  Mathematical Society 60~(6) (1954) 503--515.

\bibitem{bertsekas1995dynamic}
D.~P. Bertsekas, Dynamic programming and optimal control, Vol.~1, Athena
  scientific Belmont, MA, 1995.

\bibitem{bertsekas1996neuro}
D.~P. Bertsekas, J.~N. Tsitsiklis, Neuro-dynamic programming, Vol.~5, Athena
  Scientific Belmont, MA, 1996.

\bibitem{rust2016dynamic}
J.~Rust, Dynamic programming, The New Palgrave Dictionary of Economics (2016)
  1--26.

\bibitem{bellman2015applied}
R.~E. Bellman, S.~E. Dreyfus, Applied dynamic programming, Vol. 2050, Princeton
  university press, 2015.

\bibitem{Bodlaender88}
H.~L. Bodlaender, Dynamic programming on graphs with bounded treewidth, in:
  Automata, Languages and Programming, Proceedings of the 15th International
  Colloquium, ICALP88, Tampere, Finland, July 11-15, 1988, 1988, pp. 105--118.

\bibitem{kaelbling1996reinforcement}
L.~P. Kaelbling, M.~L. Littman, A.~W. Moore, Reinforcement learning: A survey,
  Journal of artificial intelligence research 4 (1996) 237--285.

\bibitem{RS84}
N.~Robertson, P.~Seymour, Graph minors. iii. planar tree-width, Journal of
  Combinatorial Theory, Series B 36~(1) (1984) 49 -- 64.

\bibitem{r1999parameterized}
R.~R.~G. Downey, M.~Fellows,
  \href{https://books.google.co.jp/books?id=pt5QAAAAMAAJ}{Parameterized
  Complexity}, Monographs in Computer Science, Springer Verlag, 1999.
\newline\urlprefix\url{https://books.google.co.jp/books?id=pt5QAAAAMAAJ}

\end{thebibliography}

\end{document}